\documentclass[times,sort&compress,3p]{elsarticle}
\usepackage[labelfont=bf]{caption}

\usepackage{amsmath,amsfonts,amssymb,amsthm,booktabs,color,epsfig,graphicx,hyperref,url}

\usepackage{mathtools}

\theoremstyle{plain}
\newtheorem{theorem}{Theorem}

\newtheorem{proposition}{Proposition}

\newtheorem{corollary}{Corollary}

\theoremstyle{definition}
\newtheorem{definition}{Definition}
\newtheorem{remark}{Remark}
\newtheorem{example}{Example}

\begin{document}

\begin{frontmatter}

\title{Dimension reduction of multivariate densities in Bayes spaces} 

\author[olo]{Adéla Czolková\corref{mycorrespondingauthor}}
\author[olo]{Karel Hron}
\author[ber]{Sonja Greven}

\address[olo]{Univerzita Palackého v Olomouci, 17. listopadu 12, 77900 Olomouc, Česká republika}
\address[ber]{Humboldt-Universität zu Berlin, Unter den Linden 6, 10099 Berlin, Deutschland}

\cortext[mycorrespondingauthor]{Corresponding author. Email address: \url{adela.czolkova01@upol.cz}}

\begin{abstract}
The Bayes space provides a Hilbert space structure for analysing probability density functions (PDFs), equipping them with a geometry that reflects their relative and constrained nature. A key tool in this framework is the centred logratio (clr) transformation, which establishes an isometric isomorphism between the Bayes space and (a subspace of) the classical $L^2$ space. This makes it possible to apply functional data analysis (FDA) techniques, particularly functional principal component analysis (FPCA), to both univariate and multivariate density data in the context of dimension reduction. For multivariate PDFs, embedding them in the Bayes space enables an orthogonal decomposition into independent and interactive components. Furthermore, the independent part can be decomposed into mutually orthogonal geometric marginals. This structure provides more profound insights into the sources of variation in multivariate densities. We show that this decomposition of the total variance is  optimal in a PCA sense, impacting the interpretation of the eigenfunctions and scores resulting from FPCA. We demonstrate that applying FPCA directly to multivariate densities is equivalent in a certain sense to applying multivariate FPCA to their decomposed form, with the resulting eigenfunctions and scores decomposing accordingly. The unique decomposition based on these theoretical results is applied to housing and geological empirical data respectively, demonstrating the interpretability and practical value of this approach. 
\end{abstract}

\begin{keyword}
Bayes spaces \sep Functional principal component analysis \sep Orthogonal decomposition \sep Probability density functions 
\MSC[2020] Primary 62R10 \sep Secondary 62H25
\end{keyword}

\end{frontmatter}

\newcommand{\B}{\mathcal{B}}
\newcommand{\Lz}{L_0^2}
\newcommand{\clr}{\mathrm{clr}}
\newcommand{\intr}{\mathrm{int}}
\newcommand{\ind}{\mathrm{ind}}
\newcommand{\X}{\mathrm{X}}
\newcommand{\Y}{\mathrm{Y}}
\newcommand{\bs}{\boldsymbol}
\newcommand{\dd}{\mathrm{d}}

\section{Introduction} 
\label{sec:intro}

The analysis of PDFs as data objects is an established topic in functional data analysis, and several approaches for their dimension reduction are discussed in the literature \cite{kneip01,petersen22}. The Bayes space methodology for the representation of PDFs as scale-invariant objects \cite{egozcue06,boogaart14} has turned out to provide a strong methodological framework with implications for functional data analysis and beyond, and the field is constantly growing \cite{delicado11,hron16,menafoglio14,menafoglio16,menafoglio18,talska18,lei23,eckardt24,ma24,murph24,qiu24,kutta25,maier25}. For the case of multivariate PDFs, \cite{genest23} developed an orthogonal decomposition, which enables to identify an independent part (contained essentially in reformulated univariate marginals) as well as interactions among variables. This decomposition has many important properties, which do not have a counterpart in standard probability theory, including a generalized Pythagoras' theorem for (squared) norms and a marginal-free property for PDFs from the interaction part.

In this paper, we introduce two important implications of this orthogonal decomposition in the context of dimension reduction, which result in properties that do not have a counterpart in the classical $L^2$ setting. First, we demonstrate that this decomposition is optimal in the sense of a principal component analysis. Following preliminary considerations in \cite{boogaart22}, this is also reflected in the decomposition of the total variance of the PDF sample. This finding allows us to perform effective diagnostics and determine which parts of the decomposition are essential for the multivariate PDF structure. In particular, this can be used to evaluate the importance of individual (geometric) marginals and to identify the specific interactions that dominate the density data structure. Secondly, building on the initial developments in \cite{hron16} and \cite{filzmoser21}, we turn to functional principal component analysis itself. We derive a one-to-one correspondence between the loadings and the scores of multivariate functional principal component analysis (mFPCA) for the vectors of multivariate PDFs from the orthogonal decomposition and those obtained from a functional principal component analysis (FPCA) of the multivariate densities. This can be used to perform diagnostics on the FPCA loadings and scores, specifically to reveal the contribution of each component of the orthogonal decomposition (i.e., geometric marginals and specific interactions) to the overall picture provided by the FPCA results for multivariate PDFs.

The paper is organized as follows. In Section~\ref{sec:bs}, the basics of Bayes spaces are summarized, and in Section~\ref{sec:decomp}, the orthogonal decomposition of multivariate PDFs is recalled. Section~\ref{sec:pca} develops the principal component analysis perspective of the decomposition, and in Section~\ref{sec:fpcadec}, the implications of the decomposition for loadings and scores of both FPCA for multivariate densities and mFPCA are derived. Section~\ref{sec:app} contains applications to housing and geological data, and Section~\ref{sec:conc} concludes.


\section{Bayes spaces}
\label{sec:bs}

Bayes spaces provide a geometric representation for probability density functions (PDFs), which possess the property of scale invariance. More specifically, two positive functions (densities) $f$ and $g$ on the same domain $\Omega$ carry the same relative information \cite{genest23,hron23}, if they are proportional $g(x) = c f(x)$ for all $x \in \Omega$ for some positive real number $c$. These densities are said to be equivalent, denoted $f=_\B g$. This is a direct consequence of the scale invariance of the associated measures $\mu$ and $\nu$ such that their densities are $f = \dd\mu/\dd\lambda$ and $g = \dd\nu/\dd \lambda$
for a finite positive real-valued reference measure $\lambda$ on a measurable space $(\Omega, \mathcal{A})$.

Given such a reference measure $\lambda$, the Bayes space $\B(\lambda)$ can be defined \cite{boogaart14}.
\begin{definition}
    The Bayes space $\B(\lambda)$ is a quotient space, the space of equivalence classes of ($\lambda$ almost everywhere) positive densities on $(\Omega, \mathcal{A})$, with respect to the relation $=_\B$.
\end{definition}

\noindent In this space, we can define two operations -- perturbation $\oplus$ and powering $\odot$ -- as counterparts to the usual sum of two real functions and the multiplication of a function by a scalar in the $L^2$ space, giving $\B(\lambda)$ a vector space structure.

\begin{definition}
    Given two arbitrary densities $f,g \in \B(\lambda)$ and a number $\alpha \in \mathbb{R}$,
    \begin{equation*}
        (f \oplus g) =_\B f \cdot g, \quad (\alpha \odot f) =_\B f^{\alpha}.
    \end{equation*}
\end{definition}

\noindent Suppose a finite reference measure $\lambda$ and define a subspace $\B^2(\lambda)$ of space $\B(\lambda)$ as those elements that are square-log integrable densities with respect to $\lambda$, i.e.,
\begin{equation}\label{eq:b2}
    \B^2(\lambda) = \left\lbrace f \in \B(\lambda) \ \Big| \, \int_\Omega \left| \ln (f) \right| ^2 \dd \lambda < +\infty \right\rbrace.
\end{equation}
Considering this subspace, we can define the inner product of two densities in $\B^2(\lambda)$.
\begin{definition}
    For any $f,g \in \B^2(\lambda)$, the inner product is
    \begin{equation*}
        \left\langle f,g \right\rangle_{\B^2(\lambda)} = \frac{1}{2 \lambda(\Omega)} \int_{\Omega} \int_{\Omega} \ln \frac{f(s)}{f(t)} \, \ln \frac{g(s)}{g(t)} \, \dd\lambda(s) \, \dd\lambda(t).
    \end{equation*}
\end{definition}

\noindent With this inner product, we obtain a Hilbert space structure of $\B^2(\lambda)$. The inner product also implies the norm and the distance
\begin{equation*}
    \| f \|_{\B^2(\lambda)} = \sqrt{\left\langle f,f \right\rangle_{\B^2(\lambda)}}, \qquad \dd_{\B^2(\lambda)}(f,g) = \| f \ominus g \|_{\B^2(\lambda)},
\end{equation*}
where $f \ominus g = f \oplus [(-1)\odot g]$ is a perturbation-subtraction.

The densities are not usually processed directly in the Bayes space, but they are mapped into a subspace of the $L^2(\lambda)$ space (the space of real-valued square-integrable functions), where the standard methods of functional data analysis can be used \cite{boogaart14}. For this purpose, the centered logratio transformation, $\clr: \B^2(\lambda) \rightarrow L^2(\lambda)$, was defined as
\begin{equation}\label{eq:clr}
    \clr(f) \equiv f_c = \ln (f) - \frac{1}{\lambda(\Omega)} \int_{\Omega} \ln (f) \, \dd \lambda, \quad f \in \B^2(\lambda).
\end{equation}
For any $f\in\B^2(\lambda)$, the transformation returns a function $\clr(f)\! : \Omega \rightarrow \mathbb{R}$.
Moreover, the zero-integral constraint,
\begin{equation*}
    \int_{\Omega} \clr(f) \, \dd\lambda = 0,
\end{equation*}
holds, such that $\clr(f)$ lies in a subspace of $L^2(\lambda)$ of zero-integral functions, denoted by $\Lz(\lambda)$ and referred to as clr space in the context of Bayes spaces. This transformation, from $\B^2(\lambda)$ to $\Lz(\lambda)$, is an isometric isomorphism (in particular, it is a linear mapping). The inverse clr transformation, $\clr^{-1}$, for a function $f_c\in \Lz(\lambda)$ is defined as $\clr^{-1}(f_c) =_\B\exp \lbrace f_c \rbrace \in \B^2(\lambda)$. 

For a more detailed introduction to Bayes spaces, see, for example, \cite{hron16} and \cite{talska20}. We here focus on the case where densities are fully observed, as can typically approximately be assumed in large datasets such as in our applications. While not the focus here, in practice  the following issues may need to be considered. First, the case of small samples per density requires special care. \cite{pavluu2024principal} discuss correction approaches for FPCA for distributions observed by samples in Bayes spaces. We will also revisit this case in the discussion. Second, the necessity for all PDFs in the sample to have identical support can prohibit the use of Bayes spaces in some cases where positive density values on the considered domain cannot be reasonably assumed. Third, we typically assume as in FDA that the support is bounded. This is not a hard restriction, however, as for unbounded supports an alternative (finite) reference measure to the standard Lebesgue measure can also be considered;  the standard normal reference measure is a common choice \cite{boogaart14}. The possibility of choosing an alternative reference measure also has the advantage of enabling  to downweight parts of the domain with insufficient sampling, or with a higher sampling error from the measurement device \cite{talska20}. Despite these potential limitations, the Bayes space approach offers several advantages compared to the $L^2$ space or its competitors. These include a Hilbert space structure that respects the constrained nature of densities, subcompositional coherence \cite{maier25} ensuring consistency of results on subdomains, and in particular extendability to  multivariate densities \cite{petersen22} as relevant here. The properties present in the context of dimension reduction with FPCA are elaborated on in the next sections.


\section{Multivariate densities and their orthogonal decomposition} \label{sec:decomp}

We  now introduce Bayes spaces  for multivariate densities as well as their orthogonal decomposition into (geometric) marginals and interactive parts, which we build on for their FPCA. In the case of $d$-variate densities, let $(\Omega,\mathcal{A})$ be a $d$-dimensional product measure space with a finite product reference measure $\lambda$, where
\begin{equation*}
    \Omega = \Omega_1 \times \cdots \times \Omega_d, \quad \mathcal{A} = \mathcal{A}_1 \otimes \cdots \otimes \mathcal{A}_d, \quad \lambda = \lambda_1 \otimes \cdots \otimes \lambda_d.
\end{equation*}
Then the Bayes spaces $\B(\lambda)$, $\B(\lambda_1)$, \dots, $\B(\lambda_d)$, and their subspaces $\B^2(\lambda)$, $\B^2(\lambda_1)$, \dots, $\B^2(\lambda_d)$ can be defined in a similar manner as stated above in (\ref{eq:b2}), likewise for the clr transformation (\ref{eq:clr}) and the clr spaces $L^2_0(\lambda)$, $L^2_0(\lambda_1)$, \dots, $L^2_0(\lambda_d)$ \cite{genest23}.

Moreover, let $I$ be an arbitrary non-empty subset of $D = \lbrace 1,\dots,d \rbrace$, i.e., $\emptyset \neq I \subseteq D$, and let
\begin{equation*}
    \Omega_I = \bigtimes_{i \in I} \Omega_i, \quad \mathcal{A}_I = \bigotimes_{i \in I} \mathcal{A}_i, \quad \lambda_I = \bigotimes_{i \in I} \lambda_i,
\end{equation*}
then $\B(\lambda_I)$ is the Bayes space on $(\Omega_I,\mathcal{A}_I)$, $\B^2(\lambda_I)$ is its subspace of square-log integrable densities, and $\Lz(\lambda_I)$ the respective clr space.

Further, for any non-empty $I \subseteq D$, let $\B^2_I(\lambda)$ be a subspace of $\B^2(\lambda)$ such that
\begin{equation*}
    \B^2_I(\lambda) = \left\lbrace f \in \B^2(\lambda) \, \big| \, f =_\B f^I \cdot 1, \ f^I \in \B^2(\lambda_I), \ 1 \in \B^2(\lambda_{D\backslash I}) \right\rbrace,
\end{equation*}
where $1=\dd\lambda_{D\backslash I}/\dd\lambda_{D\backslash I}$ is a neutral element in $\B^2(\lambda_{D\backslash I})$ -- the uniform density. If $I=D$, it holds that $\B^2_D(\lambda) = \B^2(\lambda)$; and $\B^2_\emptyset(\lambda)$ is a trivial subspace of $\B^2(\lambda)$ containing only a $d$-variate uniform density $\dd\lambda/\dd\lambda$. Similarly as above, there exists the respective clr space $L^2_{0,I}(\lambda)$.

To keep in line with the geometry of Bayes spaces, the marginal information on one of the variables -- instead of by standard marginal densities (called arithmetic marginals here) -- is captured by the geometric marginals, which are orthogonal projections of a density $f$ onto $\B^2_I(\lambda)$ as shown in \cite{genest23}.

\begin{proposition}
    Given an arbitrary density $f \in \B^2(\lambda)$ and a non-empty set $I \subset D$, the unique orthogonal projection of $f$ onto $\B^2_I(\lambda)$ is given by
    \begin{equation*}
        f^I = \exp \left\lbrace \frac{1}{\lambda_{D\backslash I}(\Omega_{D\backslash I})} \int_{\Omega_{D\backslash I}} \ln(f) \ \dd\lambda_{D\backslash I} \right\rbrace,
    \end{equation*}
    and then
    \begin{equation*}
        \clr(f^I) = \frac{1}{\lambda_{D\backslash I}(\Omega_{D\backslash I})} \int_{\Omega_{D\backslash I}} \clr(f) \ \dd\lambda_{D\backslash I} \ \in L^2_{0,I}(\lambda).
    \end{equation*}
\end{proposition}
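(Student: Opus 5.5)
Since clr is an isometric isomorphism between $\B^2(\lambda)$ and $\Lz(\lambda)$, orthogonal projections in the Bayes space correspond exactly to orthogonal projections in the clr space. Indeed, if $P$ is the orthogonal projection of $\Lz(\lambda)$ onto the closed subspace $L^2_{0,I}(\lambda)=\clr(\B^2_I(\lambda))$, then $\clr^{-1}\circ P\circ\clr$ is the orthogonal projection of $\B^2(\lambda)$ onto $\B^2_I(\lambda)$. Uniqueness is then automatic from the Hilbert space projection theorem, once we know $\B^2_I(\lambda)$ is closed. So the plan is to work entirely in $\Lz(\lambda)$. There we compute the projection of $f_c=\clr(f)$ onto $L^2_{0,I}(\lambda)$, show it equals the averaged function in the statement, and then transfer back via $\clr^{-1}$.

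\textbf{Identifying the subspace.} I first characterize $L^2_{0,I}(\lambda)$. If $f=_\B f^I\cdot 1$, then $\ln f$ depends only on the coordinates in $I$, up to an additive constant. Hence $\clr(f)$ is a function of $x_I$ alone with zero $\lambda$-integral. So $L^2_{0,I}(\lambda)$ consists of the functions $h(x)=g(x_I)$ with $g\in\Lz(\lambda_I)$. This is a closed subspace, being the image of a closed subspace under the isometry $g\mapsto g\circ\pi_I$, which scales norms by $\lambda_{D\backslash I}(\Omega_{D\backslash I})^{1/2}$.

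\textbf{Computing the projection.} Define the candidate
\[
Pf_c(x_I)=\frac{1}{\lambda_{D\backslash I}(\Omega_{D\backslash I})}\int_{\Omega_{D\backslash I}} f_c(x_I,y)\,\dd\lambda_{D\backslash I}(y).
\]
By Fubini and Cauchy--Schwarz (using that $\lambda$ is finite), $Pf_c$ is well defined almost everywhere and square integrable. Its integral over $\Omega_I$ is a multiple of $\int_\Omega f_c\,\dd\lambda=0$, so $Pf_c\in L^2_{0,I}(\lambda)$.

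For orthogonality, take any $h=g\circ\pi_I$ with $g\in\Lz(\lambda_I)$. Fubini gives
\[
\int_\Omega (f_c-Pf_c)\,h\,\dd\lambda=\int_{\Omega_I} g(x_I)\Bigl[\int_{\Omega_{D\backslash I}}f_c\,\dd\lambda_{D\backslash I}-\lambda_{D\backslash I}(\Omega_{D\backslash I})\,Pf_c(x_I)\Bigr]\dd\lambda_I=0.
\]
So $Pf_c$ is the orthogonal projection. This is the clr formula in the statement.

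\textbf{Transferring back.} Applying $\clr^{-1}$ gives $f^I=_\B\exp\{Pf_c\}$. Since $f_c$ and $\ln f$ differ by a constant, averaging $\ln f$ instead of $f_c$ changes the exponent only by a constant, which is absorbed by $=_\B$. This yields the stated formula for $f^I$.

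The main obstacle is measure-theoretic rather than conceptual. One must justify that the partial integral is measurable and finite almost everywhere, and that the integral of $|f_c\,h|$ is finite so that Fubini applies. Both follow from $f_c\in L^2(\lambda)$, finiteness of $\lambda$, and Tonelli applied to $|f_c|^2$.
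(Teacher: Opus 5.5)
Your proof is correct. The paper does not actually prove this proposition; it recalls it from \cite{genest23} without argument, so there is no in-paper proof to compare against. Your route is the standard self-contained argument:
\begin{itemize}
\item pass to $\Lz(\lambda)$ via the clr isometry;
\item identify $L^2_{0,I}(\lambda)$ with the zero-mean functions of $x_I$ alone;
\item verify $\langle f_c - Pf_c\circ\pi_I,\, h\rangle_{\Lz}=0$ for every $h\in L^2_{0,I}(\lambda)$ by Fubini.
\end{itemize}
The measure-theoretic points you flag (a.e.\ finiteness and measurability of the partial average, integrability of the product with $h$) are handled correctly by Tonelli on $|f_c|^2$, Cauchy--Schwarz, and finiteness of $\lambda$.

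Two cosmetic remarks. First, when you assert membership in $L^2_{0,I}(\lambda)$ you should write $Pf_c\circ\pi_I$ rather than $Pf_c$, since $Pf_c$ itself is a function on $\Omega_I$. Second, closedness of $L^2_{0,I}(\lambda)$ is not needed here. Once you exhibit $p\in M$ with $f_c-p\perp M$, the identity $\|f_c-m\|^2=\|f_c-p\|^2+\|p-m\|^2$ for all $m\in M$ already gives both existence and uniqueness of the nearest point.
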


\begin{definition}
    For any non-empty $I \subset D$, $f^I$ is called the $I$th geometric marginal of $f$. When $I = \lbrace i \rbrace, \, i \in D$, the geometric marginal is denoted $f^i$ instead of $f^I$ or $f^{\lbrace i \rbrace}$, and $\B^2_i(\lambda) = \B^2_{\lbrace i \rbrace}(\lambda)$. Also, $f^D = f \in \B^2(\lambda)$ and
    \begin{equation*}
        f^\emptyset = \exp \left\lbrace \frac{1}{\lambda(\Omega)} \int_\Omega \ln (f) \ \dd\lambda \right\rbrace =_\B 1 = \frac{\dd\lambda}{\dd\lambda}.
    \end{equation*}
\end{definition}

\noindent A key property of multivariate PDFs in Bayes spaces is their orthogonal decomposition, derived in \cite{genest23}.

\begin{proposition}
\label{prop:decomp}
Any density $f \in \B^2(\lambda)$ can then be decomposed into independent and interactive part as follows,
\begin{equation} \label{eq:decomp1}
    f = f^\ind \oplus f^\intr = \bigoplus_{i=1}^d f^i \oplus \bigoplus_{I\subseteq D, |I| \geq 2} f^{I,\intr},
\end{equation}
where
\begin{equation} \label{eq:decomp2}
    f^{I,\intr} = \bigoplus_{J\subseteq I,J\neq\emptyset} \lbrace (-1)^{|I\backslash J|} \rbrace \odot f^J = f^I \ominus \left( \bigoplus_{J\subset I,|J|\geq2} f^{J,\intr} \oplus \bigoplus_{i \in I} f^i \right).
\end{equation}
\end{proposition}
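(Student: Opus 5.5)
The plan is to work entirely in clr space. The clr transformation is an isometric isomorphism, so $\oplus$ becomes $+$ and $\odot$ becomes scalar multiplication. Write $f_c=\clr(f)$. For $\emptyset\neq J\subseteq D$, let $A_J$ denote the averaging operator
\begin{equation*}
A_J g=\frac{1}{\lambda_{D\backslash J}(\Omega_{D\backslash J})}\int_{\Omega_{D\backslash J}} g\,\dd\lambda_{D\backslash J},
\end{equation*}
with $A_D=\mathrm{id}$ and $A_\emptyset g=\frac{1}{\lambda(\Omega)}\int_\Omega g\,\dd\lambda$. By the proposition on geometric marginals, $\clr(f^J)=A_J f_c$, and $A_\emptyset f_c=0$ because $f_c\in\Lz(\lambda)$.

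The first step is to define the interaction terms by the first expression in (\ref{eq:decomp2}). In clr form this reads
\begin{equation*}
\clr(f^{I,\intr})=\sum_{J\subseteq I}(-1)^{|I\backslash J|}A_J f_c .
\end{equation*}
Including $J=\emptyset$ changes nothing, since that term vanishes. This formula is a Möbius inversion over the Boolean lattice of subsets of $I$. The inversion formula $\sum_{J\subseteq I}(-1)^{|I\backslash J|}\sum_{K\subseteq J}h_K=h_I$ gives the equivalent statement
\begin{equation*}
A_I f_c=\sum_{K\subseteq I}\clr(f^{K,\intr}),
\end{equation*}
where for singletons $\clr(f^{\{i\},\intr})=A_{\{i\}}f_c-A_\emptyset f_c=\clr(f^i)$, and the empty-set term is zero.

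The second step applies this with $I=D$. Since $A_D f_c=f_c$, we get
\begin{equation*}
f_c=\sum_{i}\clr(f^i)+\sum_{|I|\ge2}\clr(f^{I,\intr}),
\end{equation*}
and applying $\clr^{-1}$ yields (\ref{eq:decomp1}). Setting $f^\ind=\bigoplus_i f^i$ and $f^\intr=\bigoplus_{|I|\ge2}f^{I,\intr}$ then gives the split into independent and interactive parts. The same identity for a general $I$, rearranged, is exactly the recursive second expression in (\ref{eq:decomp2}).

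The main obstacle is the adjective "orthogonal", which is what justifies calling $f^\ind$ and $f^\intr$ the independent and interactive parts. To get it, I would use the following facts about the averaging operators:
\begin{itemize}
\item By Fubini, the $A_J$ commute and satisfy $A_JA_K=A_{J\cap K}$ on functions in $L^2(\lambda)$, using that $\lambda$ is a product measure.
\item Each $A_J$ is self-adjoint and idempotent in $L^2(\lambda)$, so it is an orthogonal projection.
\end{itemize}
From these, the operators $P_I=\sum_{J\subseteq I}(-1)^{|I\backslash J|}A_J$ are mutually orthogonal projections. Indeed, $A_K P_I=P_I$ if $I\subseteq K$ and $A_K P_I=0$ otherwise, because the alternating sum cancels in pairs $J$, $J\cup\{i\}$ for any $i\in I\backslash K$.

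It follows that for $I\neq I'$ we have $P_{I'}P_I=0$, and hence $\langle P_I f_c,P_{I'}g_c\rangle=0$. Through the isometry this shows that all the components $f^i$ and $f^{I,\intr}$ are pairwise orthogonal in $\B^2(\lambda)$. In particular $f^\ind\perp f^\intr$, and the decomposition is unique as a decomposition into the orthogonal subspaces $P_I\Lz(\lambda)$.
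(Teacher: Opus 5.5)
Your proof is correct, but it runs in the opposite direction to the route the paper relies on. The paper gives no proof of its own and imports the result from \cite{genest23}. In the framework it adopts (as the text after the proposition shows), $f^i$ and $f^{I,\intr}$ are characterized as orthogonal projections onto $\B^2_i(\lambda)$ and onto the recursively defined interaction spaces $\B^2_{I,\intr}$. These are the orthogonal complements of $\B^2_{I,\oplus}(\lambda)$ in $\B^2_I(\lambda)$. So orthogonality is built in, and what has to be derived (e.g.\ by induction on $|I|$) is the alternating-sum formula in (\ref{eq:decomp2}).

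You instead take the alternating sum as the definition. You then get (\ref{eq:decomp1}) and the recursive form of (\ref{eq:decomp2}) at once from M\"obius inversion, and obtain orthogonality from $A_JA_K=A_{J\cap K}$, which makes the $P_I$ mutually orthogonal projections summing to the identity on $\Lz(\lambda)$. This is precisely the Hoeffding (functional ANOVA) decomposition of $\clr(f)$ under the normalized product measure $\lambda/\lambda(\Omega)$, with $A_J$ acting as conditional expectation given the coordinates in $J$. Your route is self-contained, gives the projectors in closed form, and makes uniqueness immediate. Note also that the proposition as stated only asserts the algebraic identities, which your first two steps already settle; the orthogonality part goes beyond it.

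The only thing left implicit is that $P_I\Lz(\lambda)$ is the clr image of the paper's $\B^2_{I,\intr}$. This follows by induction on $|I|$. The clr image of $\B^2_I(\lambda)$ is $A_I\Lz(\lambda)$. Since $A_I=\sum_{K\subseteq I}P_K$, $A_IP_K=P_K$ for $K\subseteq I$, and $P_\emptyset=A_\emptyset$ vanishes on $\Lz(\lambda)$, one has $A_I\Lz(\lambda)=\bigoplus_{\emptyset\neq K\subseteq I}P_K\Lz(\lambda)$. Hence the orthogonal complement of $\bigoplus_{\emptyset\neq K\subsetneq I}P_K\Lz(\lambda)$ inside $A_I\Lz(\lambda)$ is exactly $P_I\Lz(\lambda)$.
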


\noindent Altogether, each $d$-dimensional density can be decomposed into $2^d-1$ parts, where the independent part consists of $d$ "univariate" geometric marginals, and the interactive part is composed of $2^d-d-1$ interactions among all possible combinations of variables ($I\subseteq D$). While geometric marginals capture the univariate information from the multivariate PDF, the interactive part enables to clearly separate single interaction effects. Note that in the truly independence case, the interactive part equals the uniform density, and both arithmetic and geometric marginals coincide. 

\begin{example}
    Let $d=2$, then $\Omega = \Omega_1 \times \Omega_2$, $\lambda = \lambda_1 \otimes \lambda_2$, and $D = \lbrace1,2 \rbrace$. Geometric marginals are defined as
    \begin{eqnarray*}
        f^1 = \exp \left\lbrace \frac{1}{\lambda_2(\Omega_2)} \int_{\Omega_2} \ln(f) \ \dd\lambda_2 \right\rbrace \quad \text{and} \quad f^2 = \exp \left\lbrace \frac{1}{\lambda_1(\Omega_1)} \int_{\Omega_1} \ln(f) \ \dd\lambda_1 \right\rbrace,
    \end{eqnarray*}
    the independent part equals $f^\ind = f^1 \oplus f^2$ and for the interactive part it holds that $f^\intr = f \ominus f^1 \ominus f^2$.
\end{example}

All parts of the decomposition are orthogonal projections onto the respective subspaces of $\B^2(\lambda)$, in particular, for any $i \in D$, $f^i$ is an orthogonal projection of $f$ onto $\B^2_i(\lambda)$, and for any $I\subseteq D, |I|\geq2$, $f^{I,\intr}$ is an orthogonal projection of $f$ onto the interaction subspace
\begin{equation*}
    \B^2_{I,\intr} = \left\lbrace g \in \B^2_I(\lambda) \, \big| \, \forall h \in \B^2_{I,\oplus}(\lambda): \langle g,h \rangle_{\B^2(\lambda)} = 0 \right\rbrace,
\end{equation*}
where
\begin{equation*}
    \B^2_{I,\oplus}(\lambda) = \bigoplus_{i\in I} \B^2_i(\lambda) \oplus \bigoplus_{J\subset I,|J|\geq2} \B^2_{J,\intr}(\lambda).
\end{equation*} 
It can be shown that
\begin{equation*}
    \B^2(\lambda) = \bigoplus_{i \in D} \B^2_i(\lambda) \oplus \bigoplus_{I\subseteq D,|I|\geq2} \B^2_{I,\intr} (\lambda)
\end{equation*}
and all subspaces in the direct sum are mutually orthogonal, so the density $f$ is decomposed into $2^d-1$ orthogonal parts and this decomposition is unique \cite{genest23}.

While subspaces $\B^2_{I,\intr}$ are called interaction spaces, the space
\begin{equation*}
    \B^2_\ind(\lambda) = \bigoplus_{i\in D} \B^2_i(\lambda) = \left\lbrace f \in \B^2(\lambda) \, \big| \, \exists f^i \in \B^2_i(\lambda),i\in D : f = f^1 \oplus\cdots\oplus f^d \right\rbrace
\end{equation*}
is called the independence space, and $f^\ind$ is an orthogonal projection of $f$ onto $\B^2_\ind(\lambda)$.

The orthogonal decomposition of $f \in \B^2(\lambda)$ from Proposition \ref{prop:decomp} can be formulated in the clr space $\Lz(\lambda)$ as well, then
\begin{equation} \label{eq:dec1.clr}
    \clr(f) = \clr(f^\ind) + \clr(f^\intr) = \sum_{i \in D} \clr(f^i) + \sum_{\substack{I\subseteq D, 
    |I|\geq2}} \clr(f^{I,\intr}),
\end{equation}
where
\begin{equation} \label{eq:dec2.clr}
    \clr(f^{I,\intr}) = \sum_{\substack{J\subseteq I, 
    J\neq\emptyset}} (-1)^{|I\backslash J|} \, \clr(f^J) = \clr(f^I) - \sum_{\substack{J\subset I, |J|\geq2}} \clr(f^{J,\intr}) - \sum_{i \in I} \clr(f^i)
\end{equation}
for $I\subseteq D,|I|\geq2$. All parts are elements of the corresponding orthogonal subspaces of $\Lz(\lambda)$, namely $L^2_{0,i}(\lambda), i\in D$, and $L^2_{0,I,\intr}(\lambda), I\subseteq D, |I|\geq2$.

In the following, we will focus on the case where $\lambda$ is the Lebesgue reference measure, and leave out $\lambda$ from the notation for simplicity in the following sections. Nevertheless, the results also hold for other reference measures $\lambda$.


\section{Principal component analysis perspective of the decomposition in multivariate Bayes spaces}
\label{sec:pca}

In Bayes spaces, each PDF is considered as one data object (functional variable). It is thus natural to extend this setting to the $P$-variate case, when a vector of PDFs is considered -- to what is commonly called multivariate functional data in FDA \cite{ramsay05}. The goal of this section is to perform a principal component analysis (PCA) on objects from the orthogonal decomposition of PDFs (where $P=2^d - 1$), as a functional counterpart to the approach for compositional data introduced in \cite{wang15}. However, we first introduce the general setting of multivariate density-valued data. Given a sample of PDFs defined on a compact domain $\Omega = \Omega_1 \times \cdots \times \Omega_d \subset \mathbb{R}^d$, $d \in \mathbb{N}$, this setting leads to a $N\times P$ "data matrix" of PDFs $\mathbf{F}=(f_n^p)$, $f_n^p \in \B^2$ on $\Omega$ with columns $\mathbf{f}^p$, for observations $n\in\{1,\ldots,N\}$, and densities $p\in\{1,\ldots,P\}$. Moreover, if all PDFs in each row of $\mathbf{F}$ have the same domain (which is, e.g., the case for elements of the orthogonal decomposition from \cite{genest23}), their covariance structure can be constructed as follows.  


\subsection{Covariance structure of PDFs}

To construct the covariance structure of PDFs, we first consider the (scalar) sample variance of PDFs, extending the compositional case in \cite{wang15}, as 
\begin{equation*}
    \mathrm{var}_{\B^2}(\mathbf{f}^p) = \frac{1}{N} \sum_{n=1}^N \| f_n^p \ominus \overline{f^p} \|_{\B^2}^2,
\end{equation*}
where $\mathbf{f}^p$ is the $p$th column of $\mathbf{F}$ and $\overline{f^p} = (1/N) \odot \bigoplus_{n=1}^N f_n^p$, $p\in\{1,\dots,P\}$. It can be considered as a generalization of the concept of total variance \cite{wang15}, known from compositional data analysis \cite{pawlowsky15} as well as from functional data analysis \cite[Eq. (11.8)]{kokoszka17}. As we assume the same domain for each PDF in each sample and component, we can proceed to define the sample covariance between two PDF variables as
\begin{equation}
\label{eq:cov}
\mathrm{cov}_{\B^2}(\mathbf{f}^p,\mathbf{f}^r) = \frac{1}{N} \sum_{n=1}^N \langle f_n^p \ominus \overline{f^p}, f_n^r \ominus \overline{f^r} \rangle_{\B^2}.
\end{equation}
It can be readily shown that $\mathrm{cov}_{\B^2}(\mathbf{f}^p, \mathbf{f}^p) = \mathrm{var}_{\B^2}(\mathbf{f}^p)$.
\begin{definition}
    Given an arbitrary $N\times P$ data matrix $\mathbf{F}$ of densities  $f_n^p \in \B^2$, the sample Bayes covariance matrix is the $P \times P$ matrix
    \begin{equation} \label{eq:covmat}
        \mathbf{C}_{\B^2} \equiv \mathbf{C}_{\B^2}(\mathbf{F}) = (\mathrm{cov}_{\B^2}(\mathbf{f}^p,\mathbf{f}^r))_{p,r \in \{1,\dots,P\}}.
    \end{equation}
\end{definition}
\noindent This covariance matrix $\mathbf{C}_{\B^2}(\mathbf{F})$ is positive semi-definite, which follows from its construction.

One particular case, being of interest here, is when the $n$-th row of $\mathbf{F}$ is 
\begin{equation*}
    \mathbf{f}_n = \left( f_n^1,\dots,f_n^d,f_n^{I_1,\intr},\dots,f_n^{I_{P-d},\intr} \right)^\top \in \B^2_1 \times \cdots \times \B^2_d \times \B^2_{I_1,\intr} \times \cdots \times \B^2_{I_{P-d},\intr}, 
\end{equation*}
i.e., where the row elements come from the orthogonal decomposition of a $d$-variate PDF $f_n$ \cite{genest23} with $I_1,\dots,I_{P-d}$ indexing the subsets $I \subseteq \lbrace 1,\dots,d \rbrace,\, |I|\geq2$, and thus $P=2^d-1$. To simplify the notation, let $\mathcal{I} = \lbrace 1,\ldots,d,(I_1,\intr),\ldots,\\(I_{P-d},\intr) \rbrace$ denote a set of indices of parts of the orthogonal decomposition. Then $\mathbf{f}_n = (f_n^i)_{i \in \mathcal{I}} \in \bigtimes_{i\in\mathcal{I}}\B^2_i$. The fact that the elements of $\mathbf{f}_n,\,n\in\{1,\dots,N\}$, are mutually orthogonal has an important consequence for the covariance structure.

\begin{theorem}
\label{t:covmat}
    Let, for all $n \in \{1,\dots,N\}$, elements of $\mathbf{f}_n$ come from the orthogonal decomposition of a $d$-variate density $f_n$. Then the covariance structure is diagonal, i.e.,
    \begin{eqnarray*}
        \mathbf{C}_{\B^2}(\mathbf{F}) = \mathrm{diag} \left\lbrace \mathrm{var}_{\B^2}(\mathbf{f}^i),\ i\in\mathcal{I} \right\rbrace.
    \end{eqnarray*}
\end{theorem}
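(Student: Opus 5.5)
The plan is to show directly that every off-diagonal entry $\mathrm{cov}_{\B^2}(\mathbf{f}^i,\mathbf{f}^j)$, $i\neq j$, $i,j\in\mathcal{I}$, vanishes. The diagonal entries are already the variances, by the remark following (\ref{eq:cov}). The work is therefore entirely in transferring the pointwise orthogonality of the decomposition parts to their centred versions.

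First, I show that the sample mean of each column stays in the correct subspace. For fixed $i\in\mathcal{I}$, every $f_n^i$ lies in the same subspace $\B^2_i$ (for $i\in D$) or $\B^2_{I,\intr}$ (for $i=(I,\intr)$). Each of these is a linear subspace of $\B^2$, closed under $\oplus$ and $\odot$, as it is an orthogonal projection target in the direct sum of Section~\ref{sec:decomp}. Hence $\overline{f^i}=(1/N)\odot\bigoplus_n f_n^i$ lies in the same subspace, and so does $f_n^i\ominus\overline{f^i}$.

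This step could alternatively be seen in the clr space. By linearity of clr, $\clr(\overline{f^i})=\frac1N\sum_n\clr(f_n^i)$. Moreover, the explicit formulas (\ref{eq:dec2.clr}) are linear in $\clr(f)$, so the centred part equals the corresponding part of $f_n\ominus\bar f$. In this view, the centred row is just the orthogonal decomposition of the centred density $f_n\ominus\bar f$.

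Second, for $i\neq j$ the two centred elements lie in mutually orthogonal subspaces, since all subspaces in the direct sum decomposition of $\B^2(\lambda)$ are mutually orthogonal \cite{genest23}. Hence
\[
\langle f_n^i\ominus\overline{f^i},\,f_n^j\ominus\overline{f^j}\rangle_{\B^2}=0
\]
for every $n$. Summing over $n$ in (\ref{eq:cov}) gives $\mathrm{cov}_{\B^2}(\mathbf{f}^i,\mathbf{f}^j)=0$, and the matrix (\ref{eq:covmat}) is diagonal.

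The only point requiring care is the first step: verifying that centring preserves membership in the subspaces. For the interaction spaces, this needs them to be genuine linear subspaces. That holds because $\B^2_{I,\intr}$ is defined as an orthogonal complement within $\B^2_I$, and orthogonal complements are linear subspaces. Everything else follows immediately from orthogonality and bilinearity of the inner product.
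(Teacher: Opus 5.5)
Your proof is correct and follows the same route as the paper: reduce to centred data, then use the mutual orthogonality of the decomposition subspaces so that every off-diagonal term in (\ref{eq:cov}) vanishes. The only difference is that you justify the centring step explicitly. The centred parts stay in $\B^2_i$ resp.\ $\B^2_{I,\intr}$, or equivalently they form the decomposition of $f_n\ominus\overline{f}$ by linearity of the projections; the paper just asserts this ``without loss of generality''.
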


\begin{proof}[\textbf{\upshape Proof:}]
    Suppose that without loss of generality each density $f_n$ is centered (otherwise, we can always center them, i.e., $\tilde{f_n} = f_n \ominus \overline{f}, \ n \in \{1,\dots,N\}, \ \overline{f} =  (1/N) \odot \bigoplus_{n=1}^N f_n$), then, thanks to orthogonality of the decomposition, it is obvious that
    \begin{equation*}
        \mathrm{cov}_{\B^2}(\mathbf{f}^p,\mathbf{f}^r) = \frac{1}{N} \sum_{n=1}^N \langle f_{n}^p, f_{n}^r \rangle_{\B^2} = 0,
    \end{equation*}
    where $p,r \in \mathcal{I}, \, p \neq r$. Accordingly, the covariance structure in this particular case is diagonal.
\end{proof}

\noindent The resulting covariance matrix will be the same for PDFs in the Bayes space as well as in the clr space, because the respective inner products (and consequently the norms) in these two spaces are equal due to the isometry of the clr transformation. 


\subsection{PCA for decomposed densities}

The covariance structure of the decomposed densities, represented by the sample Bayes covariance matrix, is key to performing PCA and deriving the main modes of variability. As all PDFs from the decomposition are guaranteed to have the same domain, the $j$th principal component is defined as a linear combination
\begin{equation*}
    \mathbf{v}^j = \bigoplus_{p=1}^P (\gamma_{jp} \odot \mathbf{f}^p) = (\gamma_{j1} \odot \mathbf{f}^1) \oplus \cdots \oplus (\gamma_{jP} \odot \mathbf{f}^P),
\end{equation*}
where the vector $\bs{\gamma}_j = (\gamma_{j1},\dots,\gamma_{jP})^\top$ maximizes $\mathrm{var}_{\B^2}(\mathbf{v}^j)$ subject to $\|\bs{\gamma}_j\|=1$ and $\bs{\gamma}_j^\top\bs{\gamma}_k=0$, for $k<j, \ j \in \{1, \ldots, P\}$. The vector $\bs{\gamma}_j$ can be obtained as $j$th eigenvector from the eigendecomposition of $\mathbf{C}_{\B^2}(\mathbf{F}) = \bs{\Gamma}\bs{\Lambda}\bs{\Gamma}^\top$, where $\mathbf{\Gamma} = (\bs{\gamma}_1, \dots, \bs{\gamma}_P)$ and the diagonal matrix $\bs{\Lambda}$ contains the respective eigenvalues.

If the covariance structure is diagonal, then the eigenvalues of $\mathbf{C}_{\B^2}(\mathbf{F})$ are directly equal to the variances of the original variables, and $ \bs{\Gamma}$ equals to the identity matrix. Accordingly, for the orthogonal decomposition of a sample of $d$-variate PDFs, the principal components directly equal elements of the decomposition.
\begin{proposition}
    Consider a sample of decomposed $d$-variate densities. Let $\pi : \{1, \dots, P\} \rightarrow \mathcal{I}$ be bijective
    such that
    \begin{equation*}
        \mathrm{var}_{\B^2}(\mathbf{f}^{\pi(1)}) \geq \dots \geq \mathrm{var}_{\B^2}(\mathbf{f}^{\pi(P)}),
    \end{equation*}
    then
    \begin{equation*}
        \mathbf{v}^1 = \mathbf{f}^{\pi(1)}, \ \dots, \mathbf{v}^{P} = \mathbf{f}^{\pi(P)}.
    \end{equation*} 
\end{proposition}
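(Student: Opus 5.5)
The plan is to reduce the statement to Theorem~\ref{t:covmat} and then argue directly from the variational definition of the principal components. Relabel the columns of $\mathbf{F}$ via $\pi$ so that the $p$th column is $\mathbf{f}^{\pi(p)}$. By Theorem~\ref{t:covmat} the sample Bayes covariance matrix is then $\mathbf{C}_{\B^2}=\mathrm{diag}\{\sigma_1^2,\dots,\sigma_P^2\}$ with $\sigma_p^2=\mathrm{var}_{\B^2}(\mathbf{f}^{\pi(p)})$ and $\sigma_1^2\ge\dots\ge\sigma_P^2$.

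The first step is to express the variance of an arbitrary linear combination. For $\bs{\gamma}\in\mathbb{R}^P$ set $\mathbf{v}=\bigoplus_p(\gamma_p\odot\mathbf{f}^{\pi(p)})$. The operations $\oplus,\odot$ are linear and the centring $\ominus\overline{\cdot}$ commutes with them, so bilinearity of $\langle\cdot,\cdot\rangle_{\B^2}$ in \eqref{eq:cov} gives
\begin{equation*}
\mathrm{var}_{\B^2}(\mathbf{v})=\bs{\gamma}^\top\mathbf{C}_{\B^2}\bs{\gamma}=\sum_{p=1}^P\gamma_p^2\sigma_p^2 .
\end{equation*}
The cross terms vanish precisely because of the orthogonality in Theorem~\ref{t:covmat}. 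It is easiest to verify this in the clr space, using the isometry.

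The second step is an induction on $j$. For $j=1$, the bound $\sum_p\gamma_p^2\sigma_p^2\le\sigma_1^2\sum_p\gamma_p^2=\sigma_1^2$ holds, and it is attained at $\bs{\gamma}_1=\mathbf{e}_1$. This gives $\mathbf{v}^1=\mathbf{f}^{\pi(1)}$. Now assume $\bs{\gamma}_k=\mathbf{e}_k$ for all $k<j$. The constraint $\bs{\gamma}^\top\mathbf{e}_k=0$ forces $\gamma_1=\dots=\gamma_{j-1}=0$, so the variance is at most $\sigma_j^2$, with equality at $\mathbf{e}_j$. Hence $\mathbf{v}^j=\mathbf{f}^{\pi(j)}$. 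Equivalently, one can simply note that the eigendecomposition of a diagonal matrix with sorted entries is given by $\bs{\Gamma}=\mathbf{I}$ and $\bs{\Lambda}=\mathbf{C}_{\B^2}$.

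The only delicate point is non-uniqueness. If variances are tied, or some variance is zero, the maximiser is not unique. Likewise $\pm\mathbf{e}_j$ are both admissible, and the sign convention matters because $(-1)\odot\mathbf{f}$ is a different density. I would therefore state the conclusion as holding up to this choice: $\mathbf{e}_j$ is a valid maximiser, and when the variances are distinct it is unique up to sign. This is consistent with the freedom already left by the choice of $\pi$ in the statement.
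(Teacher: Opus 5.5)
Your proposal is correct and takes the same route as the paper: Theorem~\ref{t:covmat} makes $\mathbf{C}_{\B^2}$ diagonal, so $\bs{\Gamma}=\mathbf{I}$ and the eigenvalues are the variances in the order fixed by $\pi$. Your explicit identity $\mathrm{var}_{\B^2}(\mathbf{v})=\bs{\gamma}^\top\mathbf{C}_{\B^2}\bs{\gamma}$, the induction over the orthogonality constraints, and the caveat on ties and sign make rigorous what the paper leaves as ``similarly for other components''; one small correction is that a single, untied zero variance does not break uniqueness up to sign, because the last direction $\pm\mathbf{e}_P$ is forced by orthogonality.
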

\begin{proof}[\textbf{\upshape Proof:}]
    In the case of the sample of decomposed densities, the covariance structure reflects the orthogonality of the decomposition, i.e., $\mathbf{C}_{\B^2}(\mathbf{F})$ is diagonal as stated in Theorem \ref{t:covmat}. This means that $\mathbf{C}_{\B^2}(\mathbf{F}) = \bs{\Lambda} = \mathrm{diag}(\lambda_1,\dots,\lambda_P)$ and $\bs{\Gamma}$ is the identity matrix. Then, according to the assumption of the proposition, $\lambda_1 = \mathrm{var}_{\B^2}(\mathbf{v}^1) =   \mathrm{var}_{\B^2}(\mathbf{f}^{\pi(1)})$ and consequently $\mathbf{v}^1 = \mathbf{f}^{\pi(1)}$. Similarly, for other eigenvalues, variances, and components.
\end{proof}
In other words, the orthogonal decomposition of \cite{genest23} is optimal in the sense of a principal component analysis, and the order of variances of the $\mathbf{f}^{i}$, $i \in \mathcal{I}$, corresponds to the order of principal components.

This finding has a direct implication in the possibility for dimension reduction, neglecting such elements of the orthogonal decomposition which contribute only marginally to the total variance of the sample of PDFs. This can be assessed using any diagnostic tool known from principal component analysis, e.g., the scree plot \cite{johnson07}. Another direct implication, already outlined in \cite{boogaart22}, but seen here as a consequence of the optimality property above, is the decomposition of the sample total variance of the original densities $f_n$, which is stated in the following corollary.

\begin{corollary}
    Let $\mathrm{var}_{\B^2}(\mathbf{f})$ denote the sample total variance of the PDFs $f_n\in\B^2$ and $\mathrm{var}_{\B^2}(\mathbf{f}^i), i\in\mathcal{I}$ the sample total variances of their orthogonal decomposition. Then 
    \begin{equation*}
        \mathrm{var}_{\B^2}(\mathbf{f})=\sum_{i\in\mathcal{I}} \mathrm{var}_{\B^2}(\mathbf{f}^i).
    \end{equation*}
\end{corollary}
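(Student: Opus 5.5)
The plan is to reduce the corollary to a Pythagorean identity, sample by sample, and then average over $n$. We use the linearity of the decomposition (it consists of orthogonal projections) together with the mutual orthogonality of its parts from Proposition~\ref{prop:decomp}.

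First, the sample mean commutes with the decomposition. Each part $f^i$, $i\in\mathcal{I}$, is the orthogonal projection of $f$ onto the closed subspace $\B^2_i$ (or $\B^2_{I,\intr}$), so the map $f\mapsto f^i$ is linear with respect to $\oplus$ and $\odot$. It follows that $\overline{f^i}=(1/N)\odot\bigoplus_n f_n^i$ is the $i$th part of $\overline{f}$, and hence $f_n^i\ominus\overline{f^i}=(f_n\ominus\overline{f})^i$. This justifies the centring used in the proof of Theorem~\ref{t:covmat}: we may work with $\tilde f_n=f_n\ominus\overline f$, whose parts are exactly the centred parts. The cleanest way to verify this is to pass to clr coordinates. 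There, by Proposition~\ref{prop:decomp} and \eqref{eq:dec2.clr}, each $\clr(f^i)$ is obtained from $\clr(f)$ by averaging integrals over $\Omega_{D\setminus J}$ and taking signed sums, and these operations are evidently linear.

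Second, for each $n$, apply \eqref{eq:decomp1} to $\tilde f_n$ to get $\tilde f_n=\bigoplus_{i\in\mathcal{I}}\tilde f_n^i$ with mutually orthogonal summands. Expanding the squared norm gives
\begin{equation*}
\|\tilde f_n\|_{\B^2}^2=\sum_{i\in\mathcal{I}}\|\tilde f_n^i\|_{\B^2}^2+\sum_{i\neq r}\langle \tilde f_n^i,\tilde f_n^r\rangle_{\B^2}=\sum_{i\in\mathcal{I}}\|\tilde f_n^i\|_{\B^2}^2 .
\end{equation*}
Averaging over $n$ and using the definition of $\mathrm{var}_{\B^2}$ then yields the claim.

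Equivalently, this is the statement that the total variance equals the trace of $\mathbf{C}_{\B^2}(\mathbf{F})$. By Theorem~\ref{t:covmat} that matrix is diagonal with entries $\mathrm{var}_{\B^2}(\mathbf{f}^i)$, so the trace is the sum of the eigenvalues $\lambda_1,\dots,\lambda_P$ from the preceding proposition. This makes the link to PCA optimality explicit, although the direct Pythagorean argument above already suffices.

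The only step needing real care is the first, namely that centring commutes with taking parts. I would handle it through linearity of orthogonal projections, as above; no other obstacle is expected.
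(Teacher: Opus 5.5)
Your proof is correct, but it takes a more direct route than the paper. The paper gives no separate argument for this corollary. It presents the identity as a consequence of Theorem~\ref{t:covmat} and the PCA-optimality proposition: the part variances $\mathrm{var}_{\B^2}(\mathbf{f}^i)$ are the eigenvalues of the diagonal matrix $\mathbf{C}_{\B^2}(\mathbf{F})$, and their sum is read as the total variance. That is essentially the reformulation in your third paragraph.

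Your per-observation Pythagorean argument is more elementary and more complete, for two reasons:
\begin{itemize}
\item It needs no eigendecomposition, and it proves the link the PCA framing leaves implicit. Since $f_n\ominus\overline{f}=\bigoplus_{i}(f_n^i\ominus\overline{f^i})$, one has $\mathrm{var}_{\B^2}(\mathbf{f})=\sum_{p,r}\mathrm{cov}_{\B^2}(\mathbf{f}^p,\mathbf{f}^r)$. This reduces to the trace of $\mathbf{C}_{\B^2}(\mathbf{F})$ only because the off-diagonal entries vanish, and that vanishing is exactly your cross-term cancellation, averaged over $n$.
\item Via linearity of the orthogonal projections, you justify that centring commutes with the decomposition. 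The paper's proof of Theorem~\ref{t:covmat} simply assumes this ``without loss of generality''.
\end{itemize}

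What the paper's framing buys in return is interpretation. It identifies the summands with the PCA eigenvalues, which is what licenses using the scree plot to decide which parts of the decomposition to keep.
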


\noindent This relation is analogous to Pythagoras' Theorem for the squared norms of PDFs from an orthogonal decomposition, as given in \cite{genest23}.


\section{Decomposition of FPCA}
\label{sec:fpcadec}

One of the main goals of unsupervised learning is dimension reduction. In the previous section, we demonstrated that FPCA nicely reflects the orthogonal decomposition of PDFs at the object level, resulting in a decomposition of their total variance. In this section, we go a step further and analyse the relationship between the eigenfunctions (loadings) and the scores of multivariate FPCA (mFPCA) for the elements of the orthogonal decomposition and the decomposition of the eigenfunctions and the scores from FPCA for multivariate PDFs. Specifically, we demonstrate that the loadings obtained from mFPCA are precisely the same as the orthogonal decompositions of the loadings obtained from FPCA for multivariate PDFs. Furthermore, the scores can also be decomposed according to this orthogonal decomposition. This demonstrates that the decomposition of eigenfunctions and scores is unique. This is key to understanding how the orthogonal decomposition of PDFs determines the structure of loadings and eigenfunctions, and to using these for dimension reduction and data exploration in practice, as illustrated in Section~\ref{sec:app}.

To gain a better understanding, we first formulate FPCA for multivariate densities.


\subsection{FPCA for multivariate densities}
\label{sec:fpca}

In this section, we extend the simplicial functional principal component analysis \cite{hron16}, how FPCA for univariate densities was called, to the multivariate case.

Let $f_1, \dots, f_N$ be a sample of multivariate probability density functions (PDFs) defined on a compact domain $\Omega = \Omega_1 \times \cdots \times \Omega_d \subset \mathbb{R}^d$. We consider these PDFs as elements of the Bayes space $\B^2$ on $\Omega$, and suppose these densities to be centered; if they are not, we can always center them, i.e., $\tilde{f_n} = f_n \ominus \overline{f}, \ n \in \{1,\dots,N\}, \ \text{where} \ \overline{f} =  (1/N) \odot \bigoplus_{n=1}^N f_n$.

We aim to find the main modes of variability of the PDFs (principal components) \cite{hron16}, which are mutually orthogonal elements $\lbrace \phi_j \rbrace_{j \geq 1}$, $\phi_j \in \B^2$,
that maximize the following objective functional
\begin{eqnarray} \label{eq:max1b}
    \frac{1}{N} \sum_{n=1}^{N} \langle f_n, \phi_j \rangle_{\B^2}^2 \quad \text{subject to} \ \| \phi_j \|_{\B^2} = 1  \ \text{and} \ \langle \phi_j, \phi_k \rangle_{\B^2} = 0, \ k < j, \ j \geq 2.
\end{eqnarray}

The solution of this maximization can be found as the eigenfunctions of the sample covariance operator $V \! : \B^2 \rightarrow \B^2$ that, to each $f \in \B^2$, assigns 
\begin{equation*}
    V f = \frac{1}{N} \odot \bigoplus_{n=1}^{N} \langle f_n, f \rangle_{\B^2} \odot f_n.
\end{equation*}
The corresponding eigenvalues are $\lbrace \rho_j \rbrace_{j \geq 1}, \ \rho_1 \geq \rho_2 \geq \cdots \geq 0$ (in the sample case, only $N$ first eigenvalues are considered). Note that the sample covariance operator differs essentially from the sample Bayes covariance matrix (\ref{eq:covmat}), which is defined at the object level for a vector of $P$ densities.

For computation of the eigenfunctions (functional principal components), we use the clr transformation to map this problem into the $\Lz$ space \cite{hron16}. Then we maximize the functional
\begin{eqnarray} 
\label{eq:max1}
    \frac{1}{N} \sum_{n=1}^{N} \langle \clr(f_n),\psi_j \rangle_{\Lz}^2 \quad \text{subject to} \ \| \psi_j \|_{\Lz} = 1  \ \text{and} \ \langle \psi_j, \psi_k \rangle_{\Lz} = 0, \ k < j, \ j \geq 2,
\end{eqnarray} 
where $\psi_j = \clr(\phi_j) \in \Lz$ and also $\phi_j = \clr^{-1}(\psi_j) \in \B^2$.

The solution is a collection $\lbrace \psi_j \rbrace_{j \geq 1}$, $\psi_j \in \Lz$, of the eigenfunctions of the sample covariance operator $ V_\clr \! : \Lz \rightarrow \Lz$ that, to each $f \in \Lz$, assigns
\begin{equation*}
    V_\clr f = \frac{1}{N} \sum_{n=1}^{N} \langle \clr(f_n), f \rangle_{\Lz} \cdot \clr(f_n).
\end{equation*}
Accordingly, for the FPCA scores, it holds
\begin{equation*}
    \xi_{nj} = \langle \clr(f_n), \psi_j \rangle_{\Lz} = \langle f_n, \phi_j \rangle_{\B^2}, \quad j \geq 1, \ n \in \{1,\dots,N\},
\end{equation*}
and each PDF from the sample can be expressed as
\begin{equation*}
    f_n = \bigoplus_{j=1}^{N} \xi_{nj} \odot \phi_j \quad \text{or} \quad \clr(f_n) = \sum_{j=1}^{N} \xi_{nj} \psi_j.
\end{equation*}


\subsection{FPCA for orthogonal parts of multivariate densities} \label{sec:mfpca}

In this section, we address the question of how multivariate FPCA for orthogonally decomposed multivariate PDFs relates to FPCA applied directly to these PDFs, demonstrating the equivalence between the two approaches. This implies that the functional principal components (i.e., the eigenfunctions) closely follow the orthogonal decomposition of the PDFs. Consequently, we can gain more insight into the data structure by viewing PDFs through the lens of their reduced dimensionality, and obtain the decomposition of the FPCA scores according to the PDF decomposition.

Recall that for $d\geq2$, all PDFs can be decomposed into $P = 2^d-1$ orthogonal parts, cf. equations (\ref{eq:decomp1}),(\ref{eq:decomp2}) and (\ref{eq:dec1.clr}),(\ref{eq:dec2.clr}), that can be expressed as
\begin{equation*}
    f_n = \bigoplus_{i=1}^d f_n^i \oplus \bigoplus_{I\subseteq D,|I|\geq2} f_n^{I,\intr} = \bigoplus_{i\in\mathcal{I}} f_n^i
\end{equation*}
and
\begin{equation*}
    \clr(f_n) = \sum_{i=1}^d \clr(f_n^i) + \sum_{I\subseteq D,|I|\geq2} \clr(f_n^{I,\intr}) = \sum_{i\in\mathcal{I}}\clr(f_n^i).
\end{equation*}

For multivariate FPCA, we consider vectors of densities from the orthogonal decomposition $\mathbf{f}_n = (f_n^i)_{i\in\mathcal{I}} \in \bigtimes_{i\in\mathcal{I}} \B^2_i \subset  [\B^2]^P$, where $n \in \{1,\dots,N\}$. The densities within each of the vectors are mutually orthogonal and different parts of the decomposition are also orthogonal across samples, i.e., $f_n^i$ and $f_m^j$, $n,m \in \{1,\dots,N\}$, $i,j \in \mathcal{I}$, $j \neq i$, are orthogonal. And, in a similar manner as in Section~\ref{sec:fpca}, we would like to find the main modes of variability of these vectors \cite{filzmoser21}. That is, we want to obtain eigenfunctions in the same vector form, with the hope that elements of this vector will follow the orthogonality of the PDF decomposition.

The space $[\B^2]^P = \B^2 \times \cdots \times \B^2$ is a Hilbert space with component-wise operations perturbation and powering, and the inner product as defined in \cite{filzmoser21}. 

\begin{definition}
   For arbitrary $\mathbf{f}=(f_p)_{p=1}^P,\, \mathbf{g}=(g_p)_{p=1}^P \in [\B^2]^P$ and $\alpha \in \mathbb{R}$,
    \begin{equation*}
       (\mathbf{f} \oplus \mathbf{g})_p = f_p \oplus g_p \quad \text{and} \quad (\alpha \odot \mathbf{f})_p = \alpha \odot f_p, \quad p \in \{1,\dots,P\}.
    \end{equation*}
    Further, the inner product is defined as
    \begin{equation*}
        \left\langle \mathbf{f},\mathbf{g} \right\rangle_{[\B^2]^P} = \sum_{p=1}^P \langle f_p,g_p \rangle_{\B^2}.
    \end{equation*}
\end{definition}

\noindent Moreover, the clr transformation and its inverse are also defined component-wise so that
\begin{eqnarray*}
    \boldsymbol{\clr}(\mathbf{f}) &=& (\clr(f_p))_{p=1}^P \in [\Lz]^P, \quad \text{for} \ \mathbf{f} = (f_p)_{p=1}^P \in [\B^2]^P, \\
    \boldsymbol{\clr}^{-1}(\mathbf{g}) &=& (\clr^{-1}(g_p))_{p=1}^P \in [\B^2]^P, \quad \text{for} \ \mathbf{g} = (g_p)_{p=1}^P \in [\Lz]^P.
\end{eqnarray*}
	
To obtain the $j$th principal component (PC), we maximize the following functional over $[\B^2]^P$, analogously to multivariate FPCA for functional data \cite{happ18},
\begin{eqnarray} \label{eq:functional}
    \frac{1}{N} \sum_{n=1}^{N} \langle \mathbf{f}_n, \bs{\phi}_j \rangle_{[\B^2]^P}^2 \quad \text{subject to} \ \| \bs{\phi}_j \|_{[\B^2]^P} = 1  \ \text{and} \ \langle \bs{\phi}_j, \bs{\phi}_k \rangle_{[\B^2]^P} = 0, \ k < j, \ j \geq 2.
\end{eqnarray}

The PCs $\bs{\phi}_j \in [\B^2]^P$ are eigenfunctions of the sample covariance operator $\bs{V} \! : [\B^2]^P \rightarrow [\B^2]^P$ that, to each $\mathbf{f} \in [\B^2]^P$, assigns
\begin{equation} \label{eq:Vf}
    \bs{V}\mathbf{f} = \frac{1}{N} \odot \bigoplus_{n=1}^{N} \langle \mathbf{f}_n, \mathbf{f} \rangle_{[\B^2]^P} \odot \mathbf{f}_n.
\end{equation}
Note that these eigenfunctions must lie in the span of $\bs{V}$ and due to  \eqref{eq:Vf} can be written as a linear combination (in the Bayes space sense) of the $\mathbf{f}_n$. Thus, like the $\mathbf{f}_n$, they  are in fact elements of the subspace $
\bigtimes_{i\in\mathcal{I}}\B^2_i \subset [\B^2]^P$.

After clr transformation of the PCA problem (\ref{eq:functional}), we maximize
\begin{eqnarray} \label{eq:max2}
    \frac{1}{N} \sum_{n=1}^{N} \langle \bs{\clr}(\mathbf{f}_n), \bs{\psi}_j \rangle_{[\Lz]^P}^2 \quad \text{subject to} \ \| \bs{\psi}_j \|_{[\Lz]^P} = 1  \ \text{and} \ \langle \bs{\psi}_j, \bs{\psi}_k \rangle_{[\Lz]^P} = 0, \ k < j, \ j \geq 2,
\end{eqnarray}
where $\bs{\psi}_j = \bs{\clr}(\bs{\phi}_j) \in \bigtimes_{i\in\mathcal{I}} L^2_{0,i} \subset [\Lz]^P$ and $\bs{\phi}_j \in [ \B^2 ]^P$.
	
Then, to each $\mathbf{f} \in [\Lz]^P$, the transformed sample covariance operator $\bs{V_\clr} \! : [\Lz]^P \rightarrow [\Lz]^P$ assigns
\begin{equation*}
	\bs{V_\clr} \mathbf{f} = \frac{1}{N} \sum_{n=1}^{N} \langle \bs{\clr}(\mathbf{f}_n), \mathbf{f} \rangle_{[\Lz]^P} \cdot \bs{\clr}(\mathbf{f}_n).
\end{equation*}
       
The PCs $\bs{\phi}_j, \, j \geq 1$ are orthonormal, and for each $j \geq 1$, $\phi_j^i,\, i\in\mathcal{I},$ are mutually orthogonal. However, $\lbrace \phi^i_j \rbrace_{j \geq 1}$ alone are not orthonormal nor orthogonal for any component $i\in\mathcal{I}$. Analogously, $\lbrace \bs{\psi}_j \rbrace_{j \geq 1}$ has the same properties.
	
For vectors of PDFs, it holds that
\begin{equation*}
    \mathbf{f}_n = \bigoplus_{j=1}^{N} \zeta_{nj} \odot \bs{\phi}_j \quad \text{and} \quad \bs{\clr}(\mathbf{f}_n) = \sum_{j=1}^{N} \zeta_{nj} \bs{\psi}_j
\end{equation*}
for $n \in \{1,\dots,N\}$, where $\zeta_{nj} = \langle \bs{\clr}(\mathbf{f}_n), \bs{\psi}_j \rangle_{[\Lz]^P} = \langle \mathbf{f}_n, \bs{\phi}_j \rangle_{[\B^2]^P}$ are the respective scores. 


We now present an interesting equivalence between FPCA for multivariate densities (Section~\ref{sec:fpca}) and multivariate FPCA for vectors of decomposed densities (Section~\ref{sec:mfpca}) that can be employed to conduct a more in-depth analysis of FPCA results (eigenfunctions and scores) for multivariate densities. While the following theorem is formulated and proved in the clr space, the corresponding result holds in the Bayes space as well, as discussed below.

\begin{theorem} \label{theo:equiv}
    Let $\lbrace \psi_j^0 \rbrace_{j\geq1}$ be the principal components maximizing the FPCA functional for multivariate densities on clr level in (\ref{eq:max1}), and let $\bs\psi_j = (\psi_j^i)_{i\in\mathcal{I}},\, j\geq1$ be the vectors of principal components maximizing the multivariate FPCA functional for decomposed densities on clr level (\ref{eq:max2}). Then the functional principal components decompose as
    \begin{equation*}
        \psi_j^0 = \sum_{i\in\mathcal{I}} \psi_j^i \quad \text{for all} \ j\geq 1
    \end{equation*}
    and consequently
    \begin{equation}
        \label{eq:clr.scores}
        \xi_{nj} = \langle \clr(f_n),\psi_j^0 \rangle_{\Lz} = \sum_{i\in \mathcal{I}} \langle \clr(f^i_n),\psi^i_j \rangle_{\Lz} = \langle \bs{\clr}(\mathbf{f}_n), \bs{\psi}_j \rangle_{[\Lz]^P} = \zeta_{nj},
\end{equation}
where $j \geq 1, \ n \in \{1,\dots,N\}$.
\end{theorem}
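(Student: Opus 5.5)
The plan is to build an explicit isometric isomorphism between the subspace $\bigtimes_{i\in\mathcal{I}} L^2_{0,i}$ of $[\Lz]^P$ and $\Lz$ that intertwines the two covariance operators. Then I will transfer eigenfunctions and scores through it. Define
\[
S:\bigtimes_{i\in\mathcal{I}} L^2_{0,i}\to \Lz,\qquad S\bigl((g^i)_{i\in\mathcal{I}}\bigr)=\sum_{i\in\mathcal{I}} g^i ,
\]
where $L^2_{0,I,\intr}$ is written as $L^2_{0,i}$ for $i=(I,\intr)$. By the clr form (\ref{eq:dec1.clr}) of Proposition~\ref{prop:decomp}, $\Lz$ is the orthogonal direct sum of the spaces $L^2_{0,i}$, $i\in\mathcal{I}$. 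Hence $S$ is a linear bijection, and its inverse sends $g\in\Lz$ to the vector of its orthogonal projections $(g^i)_{i\in\mathcal{I}}$. By mutual orthogonality of the parts,
\[
\langle S\mathbf{g},S\mathbf{h}\rangle_{\Lz}=\sum_{i,k}\langle g^i,h^k\rangle_{\Lz}=\sum_{i}\langle g^i,h^i\rangle_{\Lz}=\langle \mathbf{g},\mathbf{h}\rangle_{[\Lz]^P}.
\]
So $S$ is an isometry, and $S(\bs{\clr}(\mathbf{f}_n))=\clr(f_n)$ for every $n$.

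Next I show that $S\bs{V_\clr}=V_\clr S$ on $\bigtimes_{i} L^2_{0,i}$. For $\mathbf{g}$ in that subspace, linearity and isometry give
\[
S\bs{V_\clr}\mathbf{g}=\frac1N\sum_n \langle \bs{\clr}(\mathbf{f}_n),\mathbf{g}\rangle_{[\Lz]^P}\,S\bs{\clr}(\mathbf{f}_n)=\frac1N\sum_n\langle \clr(f_n),S\mathbf{g}\rangle_{\Lz}\clr(f_n)=V_\clr S\mathbf{g}.
\]
There is one technical point. Problem (\ref{eq:max2}) is posed over all of $[\Lz]^P$, but $\bs{V_\clr}$ annihilates the orthogonal complement of $\bigtimes_i L^2_{0,i}$. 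Its eigenfunctions with nonzero eigenvalues therefore lie in this subspace, as the paper already noted. For the same reason the objective in (\ref{eq:max2}) depends only on the projection of $\bs\psi_j$ onto the subspace. The two maximization problems therefore correspond exactly: $\bs\psi$ is feasible and optimal for (\ref{eq:max2}) if and only if $S\bs\psi$ is feasible and optimal for (\ref{eq:max1}). Feasibility transfers because $S$ preserves norms and orthogonality. Optimality transfers because the objective values coincide, since $\langle\bs{\clr}(\mathbf{f}_n),\bs\psi\rangle=\langle\clr(f_n),S\bs\psi\rangle$. Inductively in $j$, it follows that $\psi_j^0=S\bs\psi_j=\sum_i\psi_j^i$, and the eigenvalues agree.

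The main obstacle is non-uniqueness of eigenfunctions: sign flips, and rotations within eigenspaces of multiplicity greater than one. I will handle this by stating the result up to this inherent indeterminacy. Equivalently, I will show that $S$ maps the set of solutions of (\ref{eq:max2}) bijectively onto the set of solutions of (\ref{eq:max1}), so that for any choice of $\{\bs\psi_j\}$ the family $\{S\bs\psi_j\}$ is a valid choice of $\{\psi_j^0\}$, and conversely via $S^{-1}$. A second subtlety is the eigenfunctions with zero eigenvalue (indices $j$ beyond the rank). I will restrict to $j$ with $\rho_j>0$, or note that completing orthonormal bases inside the subspace can be done compatibly through $S$.

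Finally, (\ref{eq:clr.scores}) follows directly. Substituting the decompositions into the score and using orthogonality of the parts to drop the cross terms gives
\[
\xi_{nj}=\Bigl\langle\sum_i\clr(f_n^i),\sum_k\psi_j^k\Bigr\rangle_{\Lz}=\sum_i\langle\clr(f_n^i),\psi_j^i\rangle_{\Lz}=\langle\bs{\clr}(\mathbf{f}_n),\bs\psi_j\rangle_{[\Lz]^P}=\zeta_{nj}.
\]
The Bayes-space version then follows because $\clr$ is an isometric isomorphism.
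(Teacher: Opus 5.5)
Your proposal is correct and follows essentially the paper's route: the isometry of your summation map $S$ is exactly the paper's inner-product identity (\ref{ip:2}). The paper uses that identity in the same way you do, to transfer the norm and orthogonality constraints and the objective between (\ref{eq:max1}) and (\ref{eq:max2}), and then to obtain the equality of scores.

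Your additional steps make explicit points the paper's proof tacitly assumes. These are the intertwining $S\bs{V_\clr}=V_\clr S$, the observation that maximizers of (\ref{eq:max2}) with positive objective automatically lie in $\bigtimes_{i\in\mathcal{I}}L^2_{0,i}$, and stating the result up to sign and eigenspace rotation and for $\rho_j>0$. The paper simply takes the maximizers in that subspace and treats them as unique.
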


\begin{proof}[\textbf{\upshape Proof:}]
    For all $j \geq 1$, let $\psi_j^* \in \Lz$ maximize (\ref{eq:max1}) and $\tilde{\bs{\psi}_j} = (\tilde{\psi}_j^i)_{i\in\mathcal{I}} \in \bigtimes_{i\in\mathcal{I}} L^2_{0,i} \subset [\Lz]^P$ maximize (\ref{eq:max2}). We show the equivalence of the maximizers and the corresponding constraints in (\ref{eq:max1}) and (\ref{eq:max2}), and therefore the equality $\psi_j^* = \sum_{i\in\mathcal{I}} \tilde{\psi}_j^i,\ \forall j\geq 1$. 

    First, we show the equality of the inner products in (\ref{eq:max1}) and (\ref{eq:max2}) in the following sense. For any $u,v \in \Lz,\ u = \sum_{i\in\mathcal{I}} u^i,\ v = \sum_{i\in\mathcal{I}} v^i$ using the unique decomposition in (\ref{eq:dec1.clr}), and $\mathbf{u},\mathbf{v} \in \bigtimes_{i\in\mathcal{I}}L^2_{0,i}$ with $\mathbf{u} = (u^i)_{i\in\mathcal{I}},\ \mathbf{v} = (v^i)_{i\in\mathcal{I}}$, we can write
    \begin{equation}
        \label{ip:2} \langle u,v \rangle_{\Lz} = \left\langle \sum_{i\in\mathcal{I}} u^i, \sum_{i\in\mathcal{I}} v^i \right\rangle_{\Lz} = \sum_{i\in\mathcal{I}} \langle u^i,v^i \rangle_{\Lz} = \langle \mathbf{u},\mathbf{v} \rangle_{[\Lz]^P},
    \end{equation}
    where the second equality in (\ref{ip:2}) holds because of the orthogonality of the spaces $L^2_{0,i},i\in\mathcal{I}$, which means that $u^i$ and $v^j$, $i,j \in \mathcal{I}$, $j \neq i$, are orthogonal.

    Thus, with $\psi_j = \sum_{i\in\mathcal{I}} \psi_j^i$, $\bs{\psi}_j = (\psi_j^i)_{i\in\mathcal{I}}$, $j\geq 1$, it holds that
    \begin{equation}\label{equiv1}
        \| \psi_j \|_{\Lz} = 1 \quad \Leftrightarrow \quad \| \bs{\psi}_j \|_{[\Lz]^P} = 1,
    \end{equation}
    \begin{equation}\label{equiv2}
        \langle \psi_j, \psi_k \rangle_{\Lz} = 0 \quad \Leftrightarrow \quad \langle \bs{\psi}_j, \bs{\psi}_k \rangle_{[\Lz]^P} = 0
    \end{equation}
    and
    \begin{equation}\label{equiv3}
        \psi_j^* = \text{argmax}\, \frac{1}{N} \sum_{n=1}^{N} \langle \clr(f_n),\psi_j \rangle_{\Lz}^2 \ \Leftrightarrow \ \bs{\psi}_j^* = \text{argmax}\, \frac{1}{N} \sum_{n=1}^{N} \langle \bs{\clr}(\mathbf{f}_n), \bs{\psi}_j \rangle_{[\Lz]^P}^2,
    \end{equation}
    with $\psi_j^* = \sum_{i\in\mathcal{I}} (\psi_j^*)^i $ and $ \bs{\psi_j}^* = \left( (\psi_j^*)^i \right)_{i\in\mathcal{I}}$
    as
    \begin{equation}\label{equiv4}
        \frac{1}{N} \sum_{n=1}^{N} \langle \clr(f_n),\psi_j \rangle_{\Lz}^2 = \frac{1}{N} \sum_{n=1}^{N} \langle \bs{\clr}(\mathbf{f}_n), \bs{\psi}_j \rangle_{[\Lz]^P}^2,
    \end{equation}
    where (\ref{equiv1}), (\ref{equiv2}), (\ref{equiv4}) follow from the equality of the inner products (\ref{ip:2}).

    As $\tilde{\bs{\psi}}_j$ was the maximizer of the right-hand side of (\ref{equiv3}) under the same constraints, altogether, this shows that $\psi_j^* = \sum_{i\in\mathcal{I}} \tilde{\psi}_j^i,\ j\geq 1$.

    The equality of scores $\xi_{n,j}$ and $\zeta_{n,j}$ (\ref{eq:clr.scores}) also follows from (\ref{ip:2}).  
\end{proof}

As stated before, the equivalence is shown only in the $\Lz$ space, but it also holds in the Bayes space $\B^2$, which can be shown as a direct consequence of Theorem \ref{theo:equiv} and the isometry of the clr transformation.

\begin{corollary} \label{coro:equiv}
    Let $\lbrace \phi_j^0 \rbrace_{j\geq1}$ be the principal components maximizing the functional in (\ref{eq:max1b}), and let $\bs\phi_j = (\phi_j^i)_{i\in\mathcal{I}},\, j\geq1$, be the vectors of principal components maximizing (\ref{eq:functional}). Then 
    \begin{equation}
    \label{eq:b.loadings}
        \phi_j^0 = \bigoplus_{i\in\mathcal{I}} \phi_j^i \quad \text{for all} \ j\geq 1,
    \end{equation}
    and
    \begin{equation}
    \label{eq:b.scores}
        \xi_{nj} = \langle f_n, \phi_j^0 \rangle_{\B^2} = \sum_{i\in\mathcal{I}} \langle f^i_n,\phi^i_j \rangle_{\B^2} = \langle \mathbf{f}_n, \bs{\phi}_j \rangle_{[\B^2]^P} = \zeta_{nj},
    \end{equation}
    where $j \geq 1, \ n \in \{1,\dots,N\}$, and further
    \begin{equation} \label{eq:final.result}
    f_n = \bigoplus_{i\in\mathcal{I}} f_n^i = \bigoplus_{j=1}^{N} \xi_{ij} \odot \left( \bigoplus_{i\in\mathcal{I}} \phi_j^i \right).
\end{equation}
\end{corollary}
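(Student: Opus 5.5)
The plan is to transport Theorem~\ref{theo:equiv} from the clr space back to the Bayes space using the fact that $\clr$ is an isometric isomorphism. The vector version $\bs{\clr}$ is defined componentwise, so it inherits the same properties on $[\B^2]^P$.

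\textbf{Step 1 (matching the optimization problems).} Because $\clr$ preserves inner products, the Bayes-space problem (\ref{eq:max1b}) and the clr problem (\ref{eq:max1}) have the same objective and the same constraints under $\psi = \clr(\phi)$. Hence the maximizers correspond: $\psi_j^0 = \clr(\phi_j^0)$. In the same way, $\langle \mathbf{f},\mathbf{g}\rangle_{[\B^2]^P}=\langle \bs{\clr}(\mathbf{f}),\bs{\clr}(\mathbf{g})\rangle_{[\Lz]^P}$, so (\ref{eq:functional}) and (\ref{eq:max2}) correspond and $\bs{\psi}_j=\bs{\clr}(\bs{\phi}_j)$, that is, $\psi_j^i=\clr(\phi_j^i)$. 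One subtlety should be stated explicitly: eigenfunctions are only determined up to sign, and up to rotation within an eigenspace when eigenvalues coincide. I would therefore state the identification for a consistently chosen solution, exactly as Theorem~\ref{theo:equiv} implicitly does.

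\textbf{Step 2 (loadings).} Theorem~\ref{theo:equiv} gives $\psi_j^0=\sum_{i\in\mathcal{I}}\psi_j^i$. Applying $\clr^{-1}$, which is linear and so maps sums to perturbations, yields
$\phi_j^0=\clr^{-1}\bigl(\sum_{i}\clr(\phi_j^i)\bigr)=\bigoplus_{i\in\mathcal{I}}\phi_j^i$. This is (\ref{eq:b.loadings}). It also shows that $\phi_j^i\in\B^2_i$ is the $i$th part of the orthogonal decomposition of $\phi_j^0$, by uniqueness of that decomposition.

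\textbf{Step 3 (scores).} Each inner product in (\ref{eq:clr.scores}) is converted by isometry:
\begin{equation*}
\langle f_n,\phi_j^0\rangle_{\B^2}=\langle \clr(f_n),\psi_j^0\rangle_{\Lz}, \qquad \langle f_n^i,\phi_j^i\rangle_{\B^2}=\langle \clr(f_n^i),\psi_j^i\rangle_{\Lz}, \qquad \langle \mathbf{f}_n,\bs{\phi}_j\rangle_{[\B^2]^P}=\langle \bs{\clr}(\mathbf{f}_n),\bs{\psi}_j\rangle_{[\Lz]^P}.
\end{equation*}
Together with (\ref{eq:clr.scores}) this gives (\ref{eq:b.scores}).

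\textbf{Step 4 (reconstruction).} Formula (\ref{eq:final.result}) follows by combining three facts. The first equality is Proposition~\ref{prop:decomp}. Section~\ref{sec:fpca} gives the Karhunen--Lo\`eve-type expansion $f_n=\bigoplus_{j=1}^N\xi_{nj}\odot\phi_j^0$. Substituting (\ref{eq:b.loadings}) into it gives the second equality. Here the index $\xi_{ij}$ printed in the statement should be read as $\xi_{nj}$.

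\textbf{Main obstacle.} There is no real difficulty. The only delicate point is the non-uniqueness of eigenfunctions (sign and multiplicity) in Step 1, which I would handle by fixing corresponding choices through $\clr$.
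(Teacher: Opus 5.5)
Your proposal is correct and takes the same route as the paper. The paper obtains the corollary directly from Theorem~\ref{theo:equiv} together with the fact that $\clr$, and hence its componentwise version $\bs{\clr}$, is an isometric isomorphism, so that the Bayes-space problems, loadings, scores and reconstruction correspond to their clr-level counterparts. Your explicit treatment of the sign and multiplicity ambiguity of eigenfunctions, and your reading of $\xi_{ij}$ as $\xi_{nj}$ in (\ref{eq:final.result}), are sensible refinements that the paper leaves implicit.
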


Equations (\ref{eq:clr.scores}) and (\ref{eq:b.scores}) also show the decomposition of the scores into the sum of $P$ elements, each corresponding to one of the parts from the orthogonal decomposition of the PDFs.

Equation (\ref{eq:final.result}) suggests how to proceed with the practical computations. The scores and eigenfunctions of FPCA for multivariate densities can be decomposed using Equations (\ref{eq:b.loadings}) and (\ref{eq:b.scores}). This means that the eigenfunctions can be decomposed directly using (\ref{eq:decomp1}) and (\ref{eq:decomp2}), and the scores are computed using inner products with the decomposed PDFs. Then, Theorem \ref{theo:equiv} and Corollary \ref{coro:equiv} guarantee that this orthogonal decomposition of eigenfunctions is unique. Basically, we show that this computational process is equivalent to first decomposing the densities and then performing mFPCA.

Decomposing eigenfunctions and scores paves the way for diagnostics in FPCA for multivariate densities. It is possible to evaluate the relevance and particular structure of the scores and eigenfunctions for (geometric) marginal PDFs, as well as for those related to the interaction between variables.

\begin{remark}
    The present paper focuses on the structural properties of the orthogonal decomposition and its implications for dimension reduction. Nevertheless, the asymptotic behaviour of the proposed procedure follows from standard results for functional principal component analysis in Hilbert spaces. Indeed, the Bayes space $\mathcal{B}^2(\lambda)$ is isometrically isomorphic to the Hilbert space $L_0^2(\lambda)$ through the clr transformation, such that FPCA for multivariate densities can be studied through the corresponding covariance operator in $L_0^2(\lambda)$. Under standard assumptions for functional PCA, e.g., existence of a square-integrable random process on clr-transformed density level and suitable eigenvalue separation, the empirical covariance operator converges to its population counterpart as $N \to \infty$, implying consistency of the estimated eigenvalues, eigenfunctions and scores; see \cite{kokoszka17}, Chapter 12. Since the orthogonal decomposition into geometric marginals and interaction components is defined via orthogonal projections onto closed subspaces, the decomposition inherits these asymptotic properties. Consequently, the variance decomposition and the decomposition of eigenfunctions and scores derived in Sections \ref{sec:pca} and \ref{sec:fpcadec} also hold at the population level and can be consistently estimated from the sample.
\end{remark}


\section{Applications}
\label{sec:app}

Two empirical data sets were analyzed to illustrate the usefulness of the orthogonal decomposition in dimension reduction of PDFs using FPCA. We focus on bivariate densities, which can still be easily visualized, and demonstrate how the decomposition of scores and loadings, as well as the decomposition of the variance, can enhance the analysis. For example, one can better see whether the variability comes from single variables (in terms of their geometric marginals) or from their interaction, and how each component of the decomposition contributes to the overall (reduced) data structure. The first application focuses on distributions of house price and house size for all 50 US states, the District of Columbia, and Puerto Rico. In the second application, geological data are analyzed, specifically PDFs of copper and zinc concentrations in 77 Czech districts. The results of these applications, along with the computational details, are discussed below. There is also included a sensitivity analysis to examine the effect of replacement of zero density values. All computations and plots were performed using the software \textsf{R} \cite{r}, the source code is available at \url{https://github.com/adelaczolkova/DimReductionOfDensities}. 


\subsection{Housing data}
\label{housing}

This analysis focuses on US housing data, real estate data downloaded from Kaggle \citep{housing}. We used data for all 50 US states, the District of Columbia, and Puerto Rico (hereafter referred to as 'states'), which had sufficient observations to derive the respective PDFs, and we filtered out observations with missing values.

Of the variables available in the data set, the house price in dollars ($x$~variable) and the house size in square feet ($y$ variable) were considered for the analysis. The domains of these two variables were truncated to sensible intervals of prices and sizes, covering domains in all states, specifically $(2000,2\cdot10^8)$ dollars for price and $(150,10^5)$ square feet for size. Therefore, the final version of the dataset contains 1 655 633 observations -- the number of observations for each state ranges from 969 (Alaska) to 197 544 (California). Due to truncation, 853 observations (0.05~\% of the data) were lost. Both variables were log-transformed. Following \cite{MatysGrygar2024}, a kernel density estimator was used to obtain bivariate densities of price and size for each state, resulting in discretized densities on a 30$\times$30 grid. The next step was the clr transformation accompanied by an imputation of zeros, which would cause computational problems due to the use of logarithms (zeros were imputed by a constant value similarly as in \cite{MatysGrygar2024}). The clr-transformed bivariate densities ($\clr(f_n)$) were also decomposed into three orthogonal parts -- two geometric marginals ($\clr(f_n^1) \equiv \clr(f_n^\X)$ and $\clr(f_n^2) \equiv \clr(f_n^\Y)$), which together form the independent part ($\clr(f_n^\ind)$), and one interactive part ($\clr(f_n^\intr)$), which represents the relationship between the two variables.

\begin{figure}[ht]
    \centering
    \includegraphics[width=0.8\linewidth]{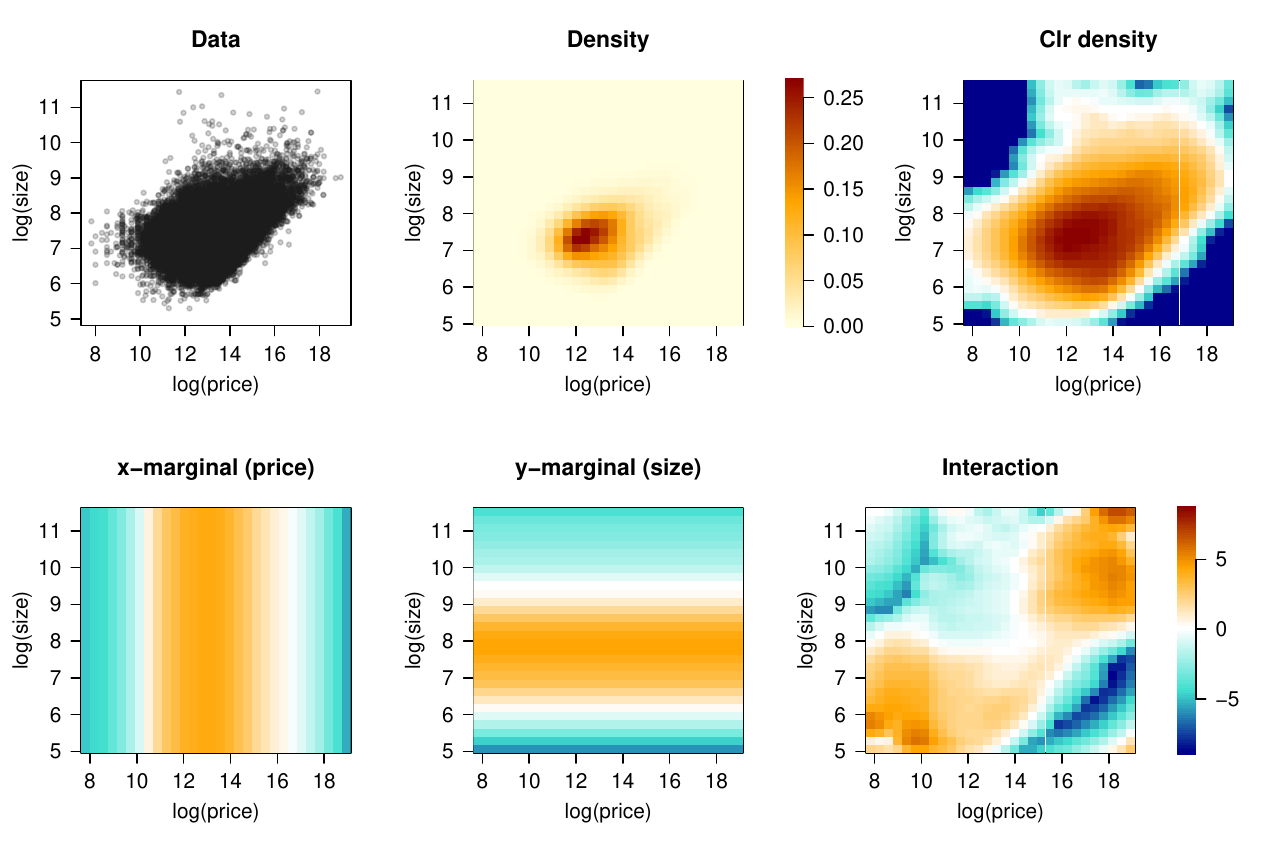}
    \caption{New York state: The first row shows the raw data, the density estimate and the clr-transformed density. The second row shows the orthogonal decomposition of this density on clr level (the two geometric marginals and the interactive part).}
    \label{fig:NY}
\end{figure}

Fig.~\ref{fig:NY} demonstrates the results of kernel density estimation and decomposition for one state -- New York. The upper row shows the raw data for New York state in the first panel, which were used to estimate the density (in $\B^2$) displayed in the second panel, and the third panel shows the clr-transformed density (in $\Lz$). The three plots in the bottom row show the orthogonal decomposition -- the geometric marginals for price and size, respectively, and the interactive part (all in $\Lz$).

The clr-transformed densities ($\clr(f_n)$) were centered and then the FPCA for multivariate densities (Section~\ref{sec:fpca}), specifically bivariate PDFs here, was performed resulting in scores $\xi_{nj}$ and loadings (eigenfunctions) $\psi_j \in \Lz$, where the indices $n$ correspond to states and $j$ to components. The eigenfunctions were directly decomposed into $\bs{\psi}_j =(\psi_j^\X,\psi_j^\Y,\psi_j^\intr)^\top \in [\Lz]^3$, which could be obtained alternatively from mFPCA for vectors of PDFs from the orthogonal decomposition, i.e., $\bs{\clr}(\mathbf{f}_n) = (\clr(f_n^\X),\clr(f_n^\Y),\clr(f_n^\intr))^\top$ (Section~\ref{sec:mfpca}). The decomposed eigenfunctions were then used to compute the decomposed scores $\langle \clr(f_n^\X),\psi_j^\X \rangle_{\Lz}, \langle \clr(f_n^\Y),\psi_j^\Y \rangle_{\Lz}, \langle \clr(f_n^\intr),\psi_j^\intr \rangle_{\Lz}$. 

\begin{figure}[t]
    \centering
    \includegraphics[width=0.95\linewidth]{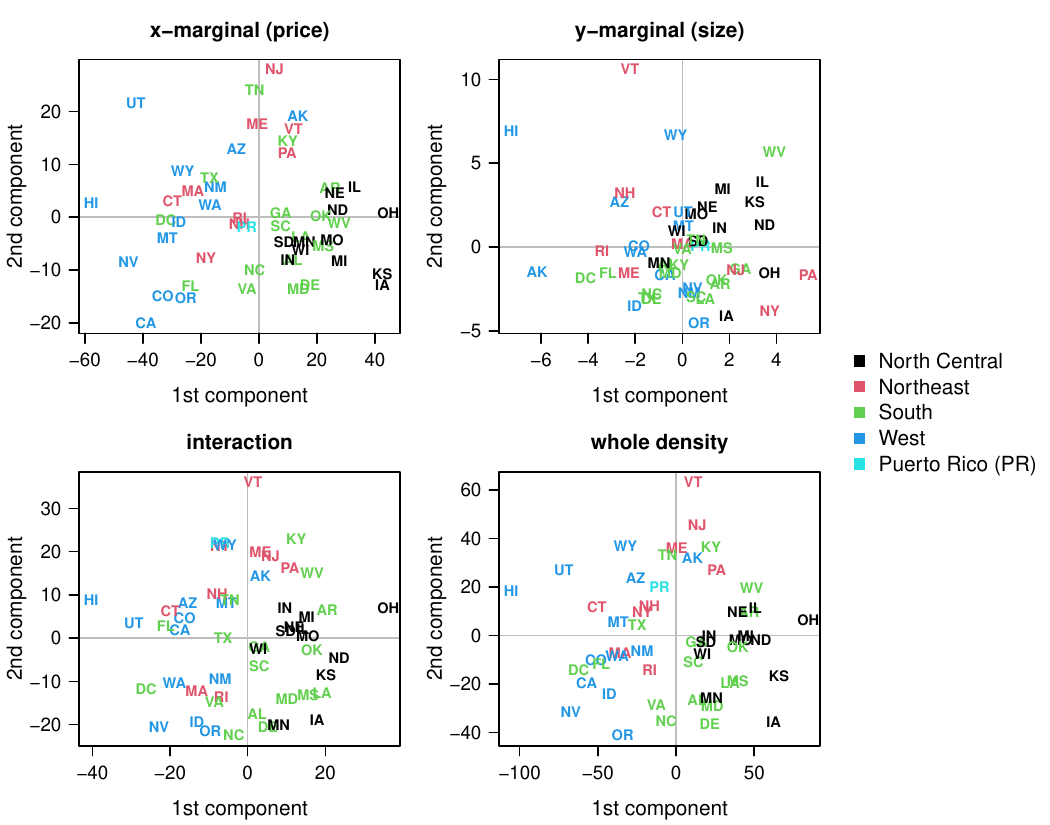}
    \caption{Housing scores and their decomposition with colours according to the four main US regions and Puerto Rico, which does not belong to any of these regions.}
    \label{fig:h:scores}
\end{figure}

Fig.~\ref{fig:h:scores} shows the resulting scores and their decomposition corresponding to the first two functional principal components. Some structural differences between the US regions are visible (North Central, Northeast, South, and West; Puerto Rico does not belong to any of the regions listed). These differences are mostly related to price and interaction scores (left plots). For example, one can see that the scores of the North Central states corresponding to the first component are always positive in these two plots, which also holds for the total scores (bottom-right plot). Furthermore, Vermont (VT) and Hawaii (HI) can be considered as outliers. This fact is mainly connected to the scores for the size variable (top-right plot) and the interaction (bottom-left plot). However, regarding the price scores (top-left plot), Vermont (VT) is part of the data cloud, meaning that it does not deviate in terms of price alone. However, when price and size are considered together, we can see that Vermont is different from the other states. Thus, the decomposition provides a deeper insight into the density data structure, which clearly demonstrates its usefulness in the exploratory functional data analysis of multivariate PDFs.

\begin{figure}[t]
    \centering
    \includegraphics[width=0.95\linewidth]{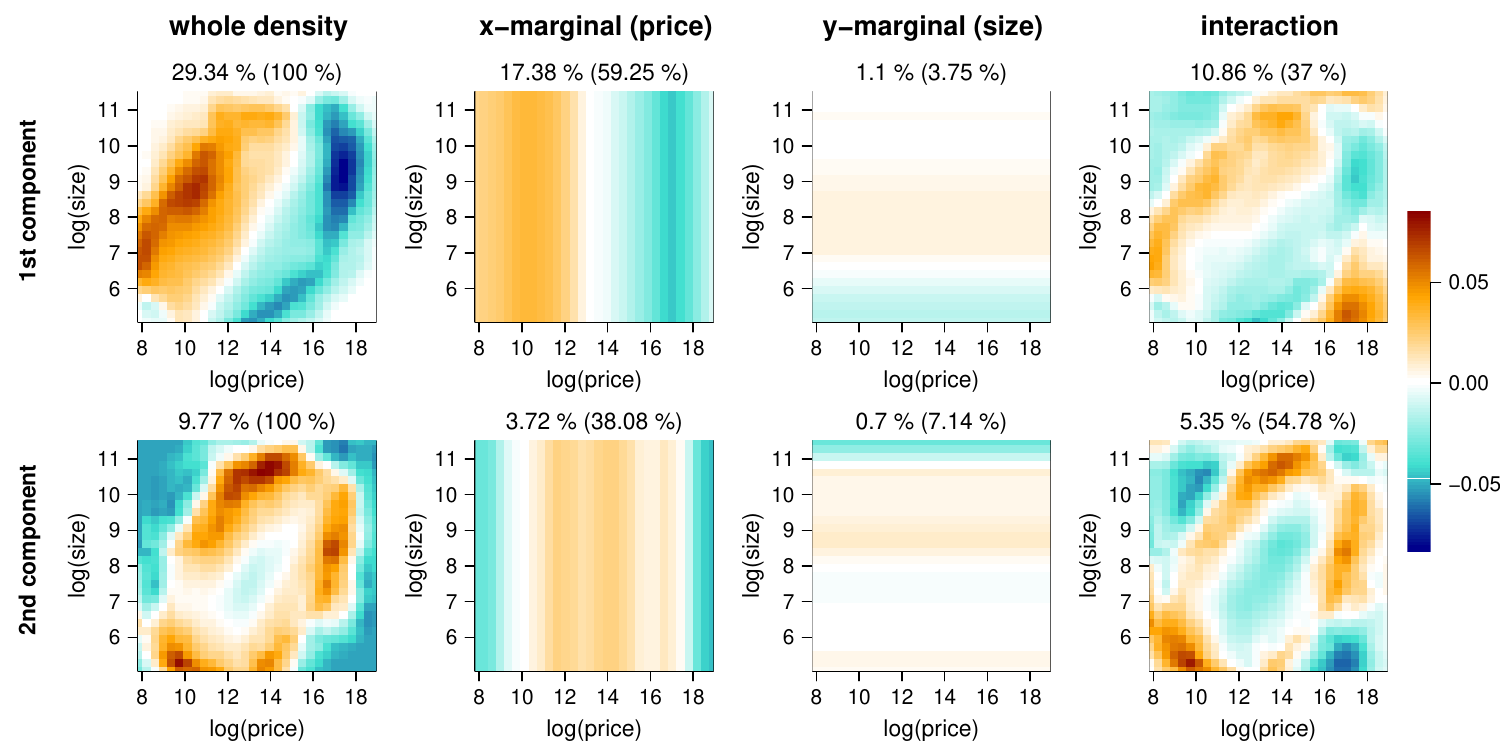}
    \caption{Eigenfunctions of US housing data for the first two functional principal components and their orthogonal decomposition on clr level. The percentages show how much variance is explained by each eigenfunction or its part. The percentages in brackets show how much variance of the functional principal component are explained by the particular parts of the eigenfunction.}
    \label{fig:h:loadings}
\end{figure}

Looking at the eigenfunctions of the first two components and their decomposition (Fig.~\ref{fig:h:loadings}) can also help with the interpretation of the scores. For example, the first component for the price ($x$-marginal) is easily interpretable -- the higher the corresponding score, the cheaper the house. In the score plot (top-left plot in Fig.~\ref{fig:h:scores}), we can see that according to the first component, the cheapest houses are mostly in the North Central and some South regions. On the other hand, the most expensive are the houses in the West (which also covers Hawaii). Further, one can notice that the size ($y$-marginal) loadings are quite flat compared to the loadings corresponding to the other parts of the decomposition. This is the reason why the size score values are in general closest to zero from all the scores and therefore have the smallest influence on the structure of the total scores. While the loadings corresponding to the geometric marginals are rather easily interpretable, the interaction loadings indicate a more complex dependence structure, i.e., stronger interactions can be expected at the tails of the price and size bivariate distributions. Specifically, the first component indicates a strong interaction between high price and low size, while the second component highlights, for example, an interaction between low size and price values. 

\begin{figure}
    \centering
    \includegraphics[width=0.6\linewidth]{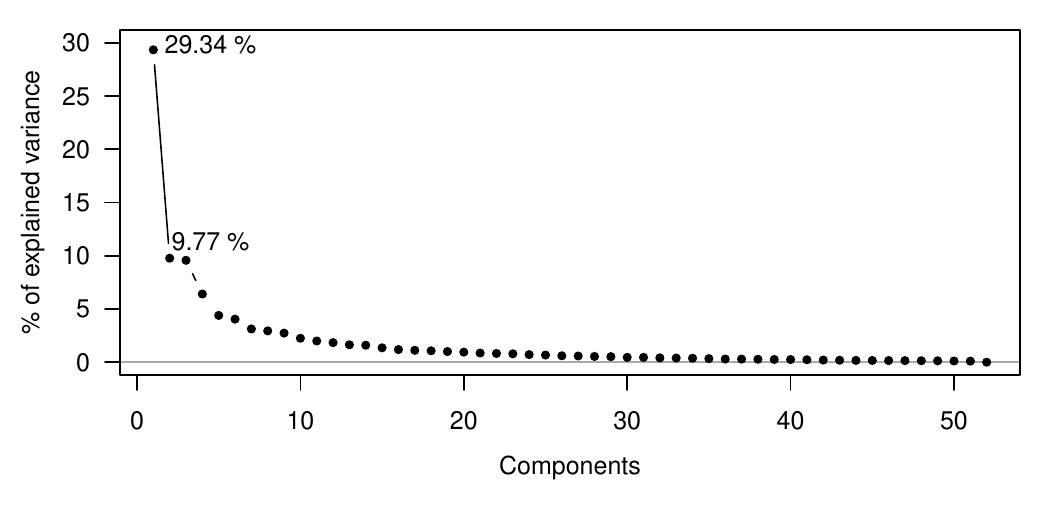}
    \caption{Scree plot for the housing data. Together, the first two principal components explain 39.11~\% of variance in the data.}
    \label{fig:h:scree}
\end{figure}

In PCA, it is useful to explore how much information (specifically variability) can be explained by the first few components. Fig.~\ref{fig:h:scree} shows how much variance is explained by the first two components that we used, and displays the results in the form of a scree plot. We can see that the percentage of explained variance for the first two components, which we used, is not very high as they explain only 39.11~\% of variance; this is, however, expected with complex socioeconomic data such as these. Moreover, there is quite a large difference between the variances of both components, as the first one explains 29.34~\% and the second one only 9.77~\% of the variance, very closely followed by the third with 9.57~\%. More details on the percentages of explained variance are included in Fig.\ref{fig:h:loadings}.

\begin{figure}[htb]
    \centering
    \includegraphics[width=0.9\linewidth]{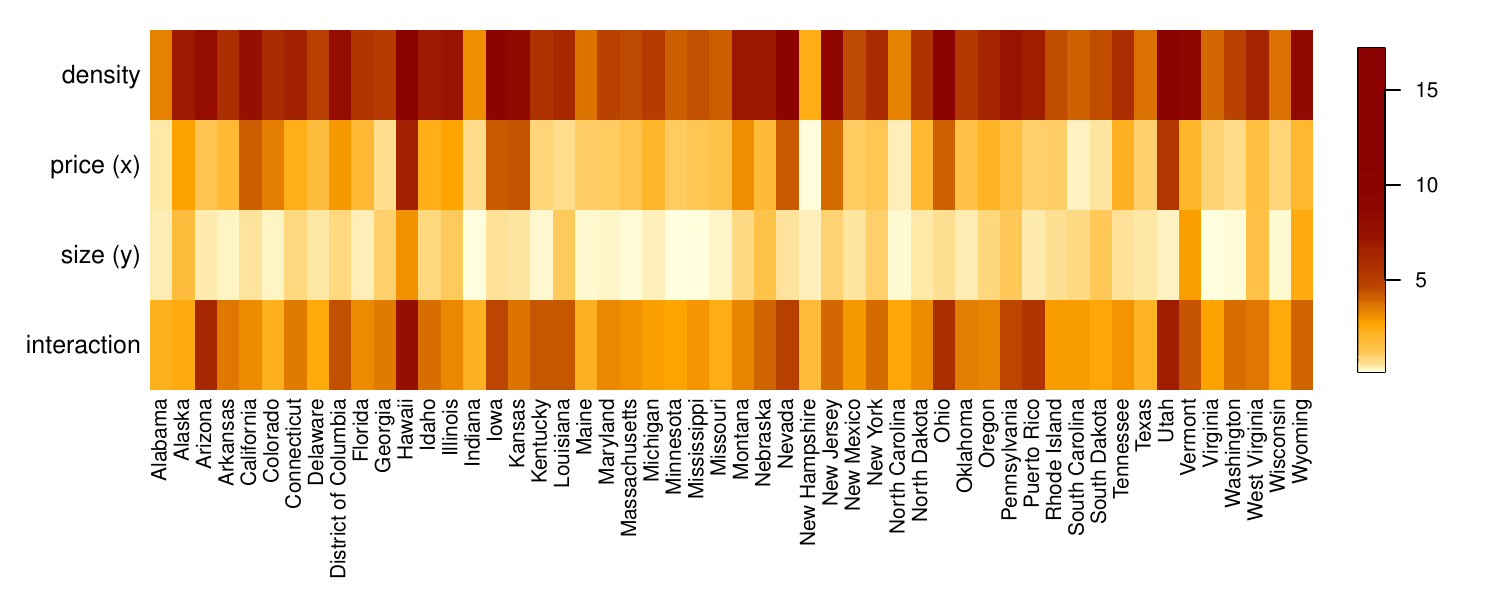}
    \caption{Squared norms of the centered bivariate densities and their decomposition for all the states.}
    \label{fig:h:norms}
\end{figure}

\begin{figure}[htb]
    \centering
    \includegraphics[width=0.9\linewidth]{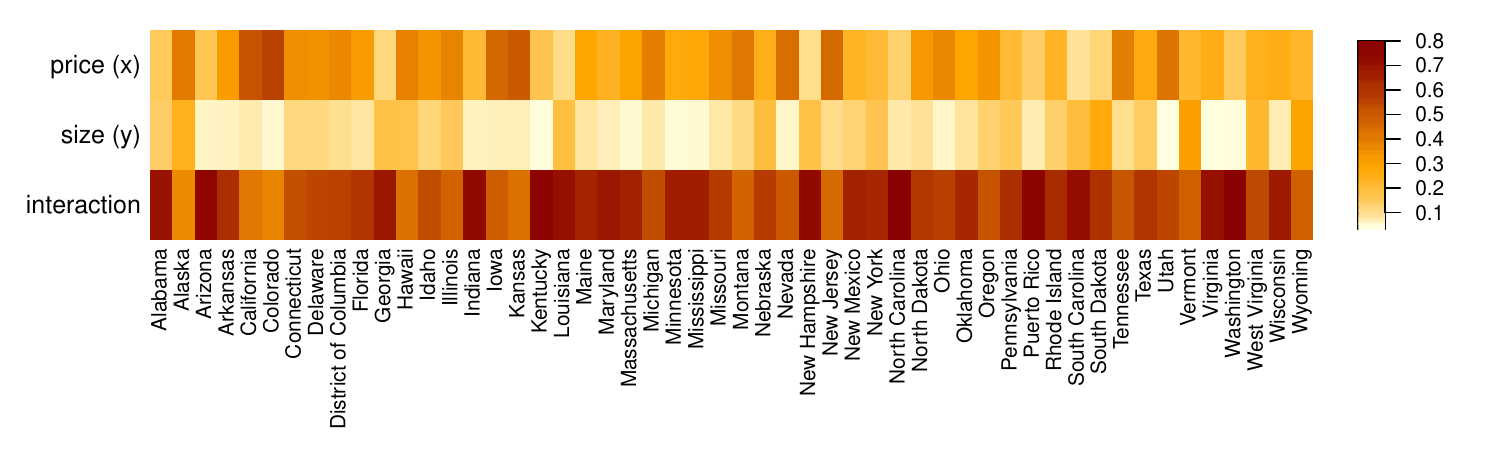}
    \caption{Relative norms of the orthogonal parts of the centered bivariate housing densities.}
    \label{fig:h:rel.norms}
\end{figure}

The orthogonal decomposition can also be used to determine how each of the density parts contributes to the total variability of the original density. Here, the variability or amount of information contained within each density (or each part) is equal to a squared norm for a centered density, and it holds that $\|f_n\|_{\B^2}^2 = \|f_n^\X\|_{\B^2}^2 + \|f_n^\Y\|_{\B^2}^2 + \|f_n^\intr\|_{\B^2}^2$, or equivalently $\|\clr(f_n)\|_{\Lz}^2 = \|\clr(f_n^\X)\|_{\Lz}^2 + \|\clr(f_n^\Y)\|_{\Lz}^2 + \|\clr(f_n^\intr)\|_{\Lz}^2$. The larger the squared norm, the larger the amount of information carried by the density. If we look at the heatmaps of squared norms (Fig. \ref{fig:h:norms}), it seems that the most relevant part of the densities is, in general, the interaction between price and house size. Some states also have a large squared norm of the price marginal. Moreover, it is worth noting that we must take into account that each density has a different total squared norm (e.g., very low in the case of Indiana or New Hampshire). This means that the decomposed variability cannot be easily compared among the states; for the comparison, it is more useful to look at the relative norms. These are shown in Fig.~\ref{fig:h:rel.norms}, where we can see that the interaction is the most informative part for most of the states.

Further, the variability of the whole sample and its decomposition can be computed as described in Section~\ref{sec:pca}. The total variance of the sample and its decomposition are the sample means of the squared norms in the case of centered densities, i.e., $\mathrm{var}_{\B^2}(\mathbf{f})=\mathrm{var}_{\B^2}(\mathbf{f}^\X) + \mathrm{var}_{\B^2}(\mathbf{f}^\Y)+ \mathrm{var}_{\B^2}(\mathbf{f}^{\intr})$, or equivalently $\mathrm{var}(\bs{\clr}(\mathbf{f})) = \mathrm{var}(\bs{\clr}(\mathbf{f}^\X)) + \mathrm{var}(\bs{\clr}(\mathbf{f}^\Y))+ \mathrm{var}(\bs{\clr}(\mathbf{f}^{\intr}))$. As the sample total variance itself does not provide any valuable information, we can focus on its decomposition and the percentages of variance contained in the orthogonal density parts. Here, the variance of the price marginals ($x$) captures 31.18~\% of the total variance, the size marginals ($y$) 11.98~\%, and the interaction 56.84~\%. Therefore, the interaction contributes the most to the total variance of the sample, and the price marginals have the second largest variance. These facts can be put in relation to the FPCA, where price and interaction components were also the most informative ones. The variance decomposition suggests that the reason is that they contain the majority of the variance in the density data.

Note that while in Section~\ref{sec:pca} we considered densities in the Bayes space $\B^2$, all the computations here were performed in the clr space $\Lz$. But this is equivalent due to the equality of inner products (and consequently squared norms) in $\B^2$ and $\Lz$.

Next, a sensitivity analysis was conducted to evaluate the robustness of the proposed dimension reduction method when handling zeros. Having performed kernel density estimation, zero density values were imputed with a constant value, in line with the approach described in \cite{MatysGrygar2024}. For each density, the constant was chosen as the mean of the ten smallest values above the cut-off limit for non-zero values, which was set at $10^{-14}$ to avoid extremely small non-zero function values from further analysis. Values smaller than the mean were imputed with the mean. This study examines how FPCA results change depending on the constant used for imputation. Taking inspiration from the simplest zero replacement strategies in compositional data analysis, we employed the mean (the default setting) and its multiples $(0.9, 0.8, 0.7, 0.6, 0.5)$ for this purpose.

\begin{figure}
    \centering
    \includegraphics[width=0.9\linewidth]{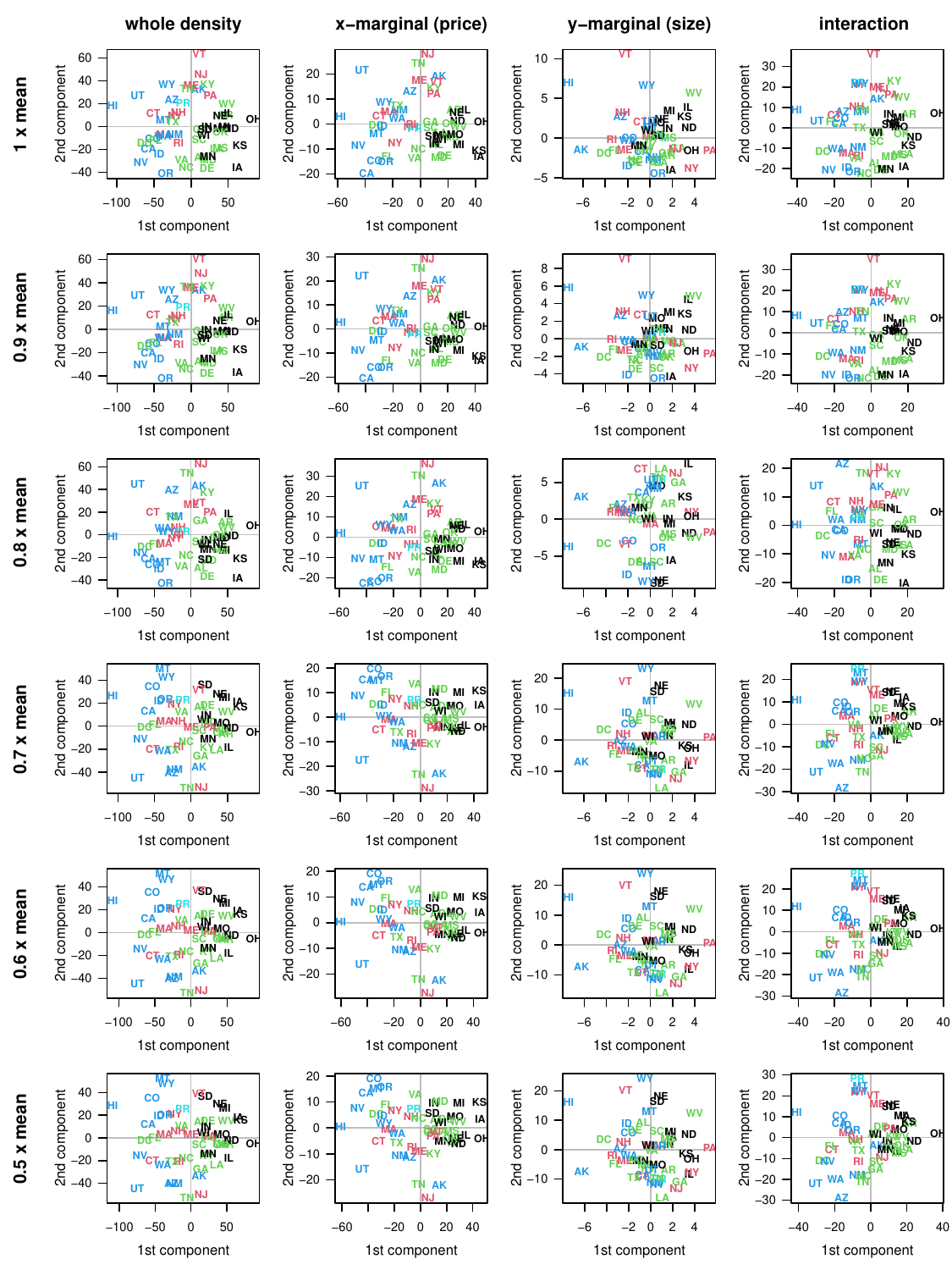}
    \caption{Decomposition of FPCA scores depending on the imputation strategy.}
    \label{fig:sim_scores}
\end{figure}

\begin{figure}
    \centering
    \includegraphics[width=0.8\linewidth]{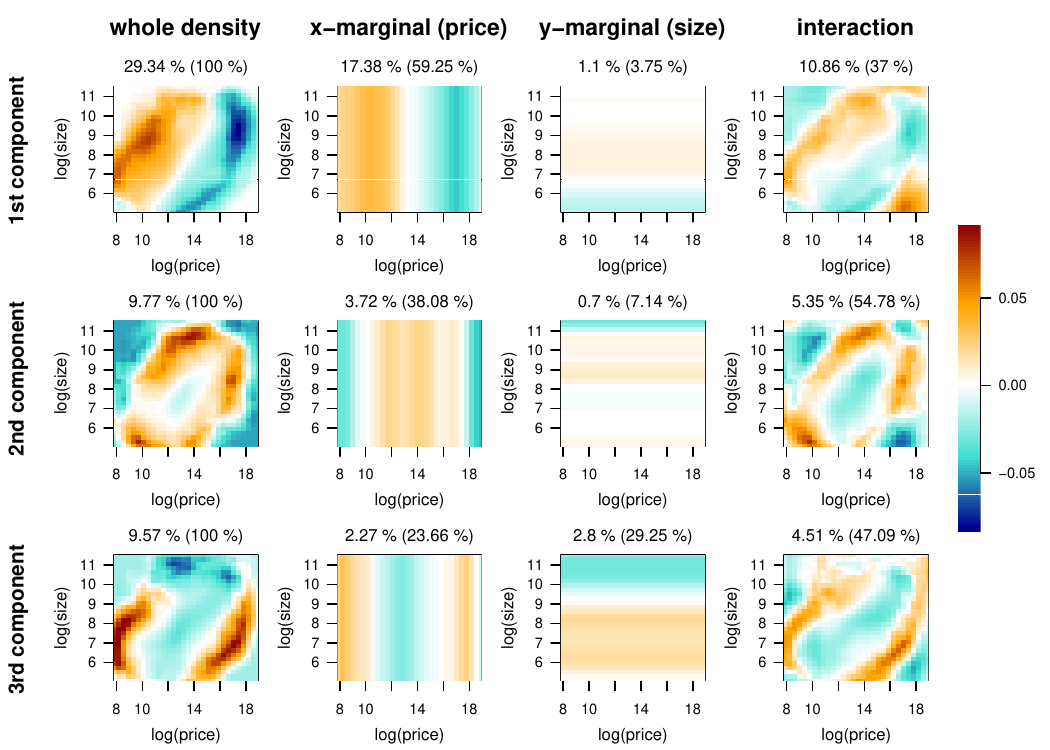}
    \caption{Loadings: $1\times\text{mean}$}
    \label{fig:sim_l1}
\end{figure}

\begin{figure}
    \centering
    \includegraphics[width=0.8\linewidth]{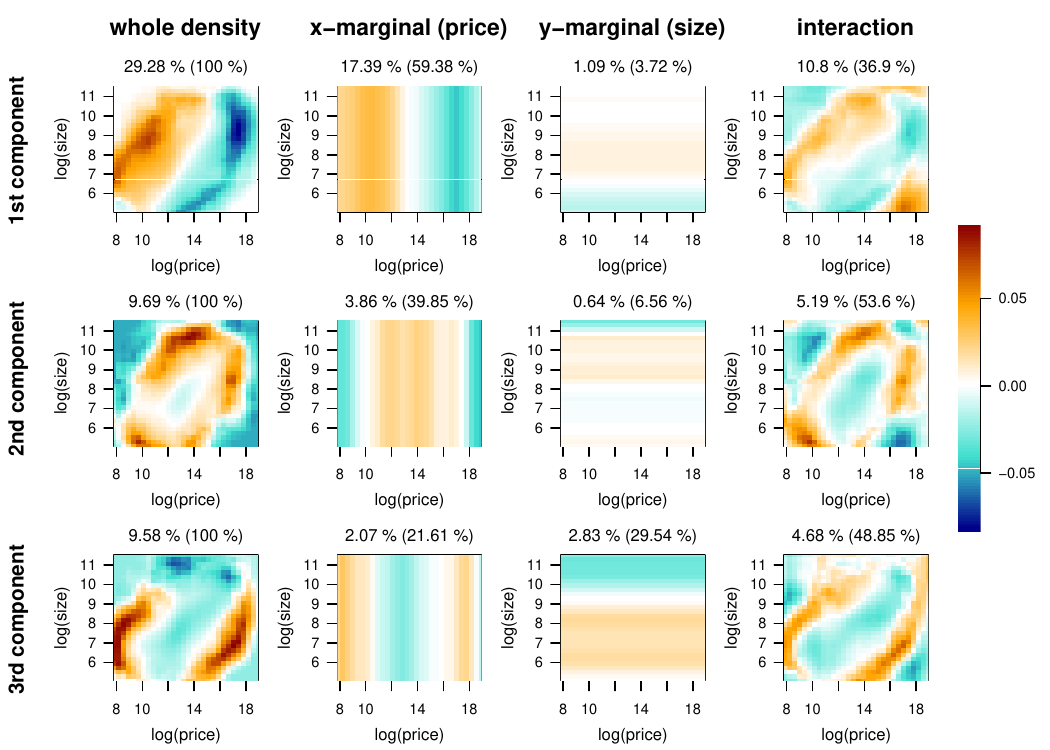}
    \caption{Loadings: $0.9\times\text{mean}$}
    \label{fig:sim_l09}
\end{figure}

\begin{figure}
    \centering
    \includegraphics[width=0.8\linewidth]{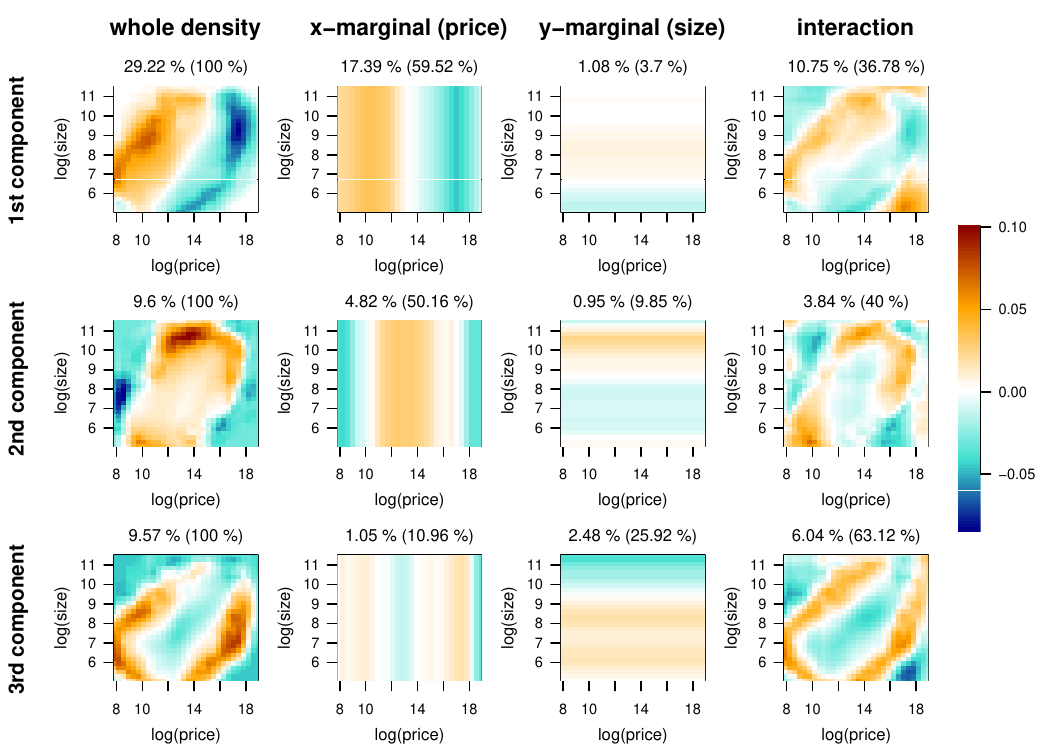}
    \caption{Loadings: $0.8\times\text{mean}$}
    \label{fig:sim_l08}
\end{figure}

\begin{figure}
    \centering
    \includegraphics[width=0.8\linewidth]{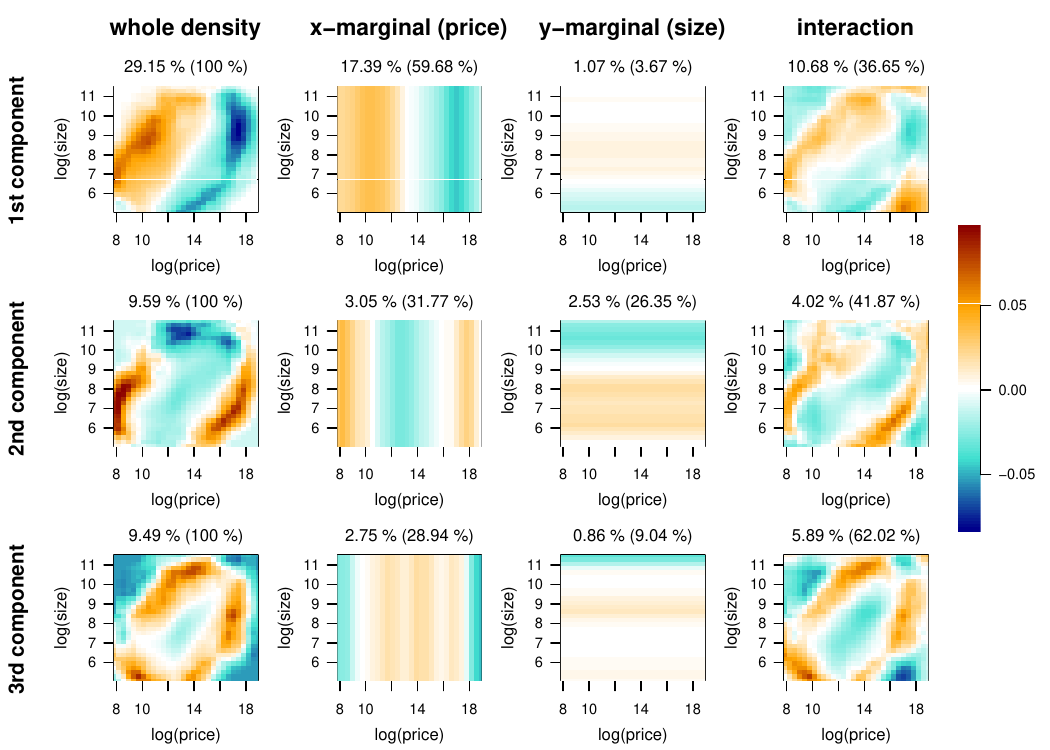}
    \caption{Loadings: $0.7\times\text{mean}$}
    \label{fig:sim_l07}
\end{figure}

\begin{figure}
    \centering
    \includegraphics[width=0.8\linewidth]{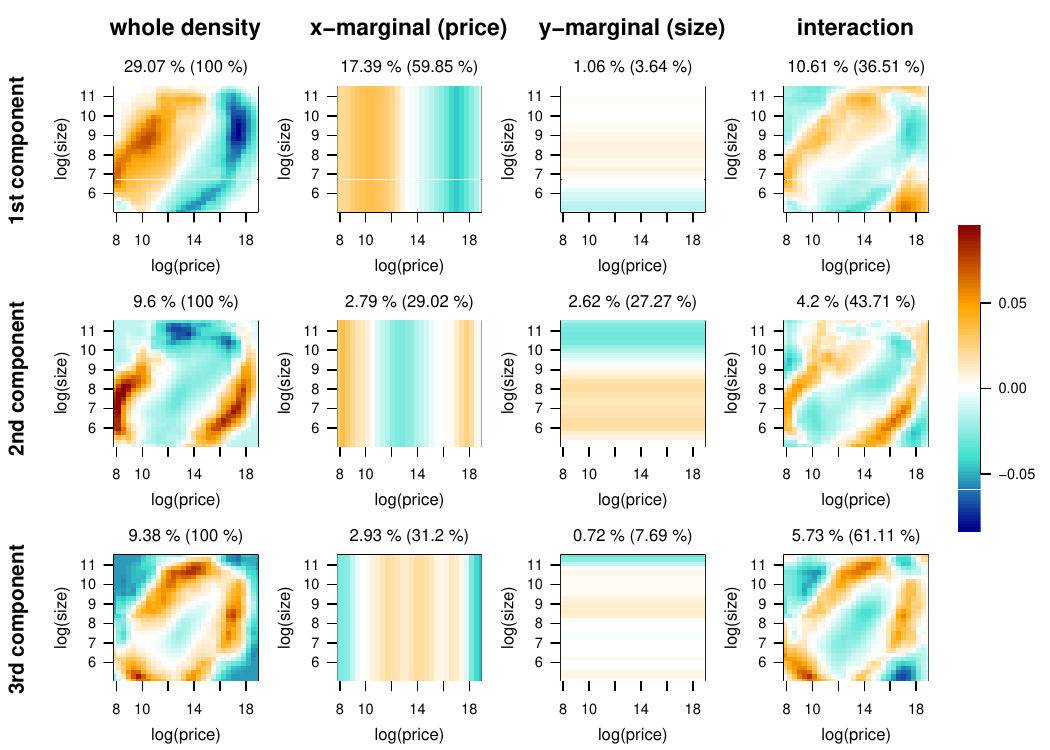}
    \caption{Loadings: $0.6\times\text{mean}$}
    \label{fig:sim_l06}
\end{figure}

\begin{figure}
    \centering
    \includegraphics[width=0.8\linewidth]{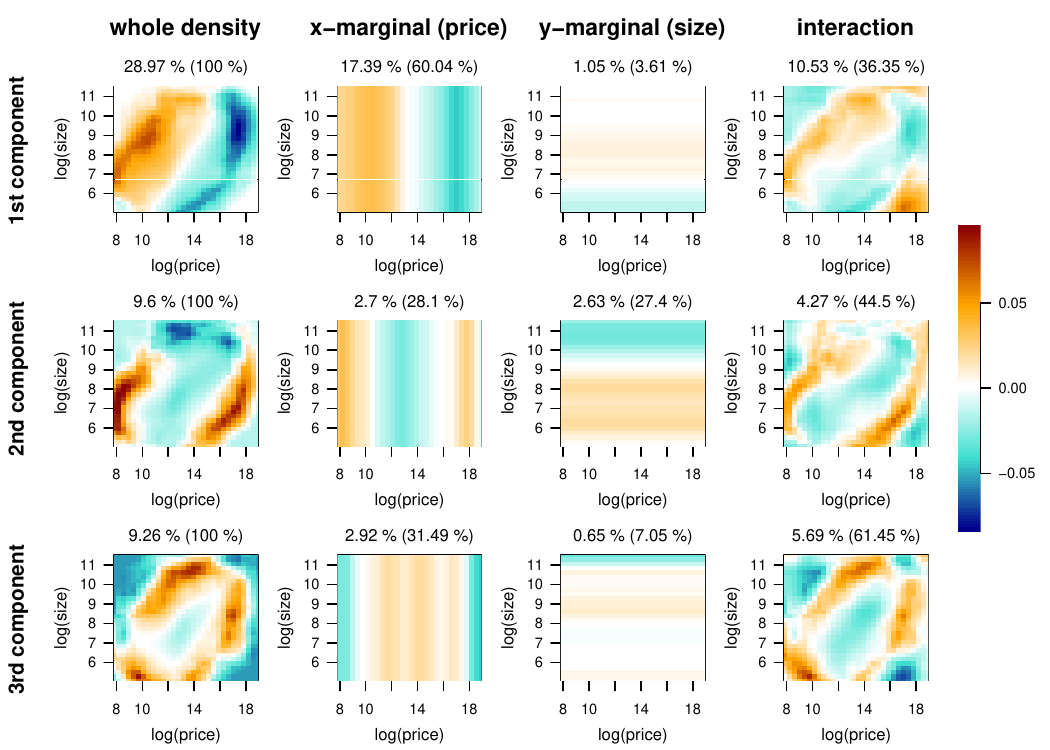}
    \caption{Loadings: $0.5\times\text{mean}$}
    \label{fig:sim_l05}
\end{figure}

\begin{figure}
    \centering
    \includegraphics[width=0.7\linewidth]{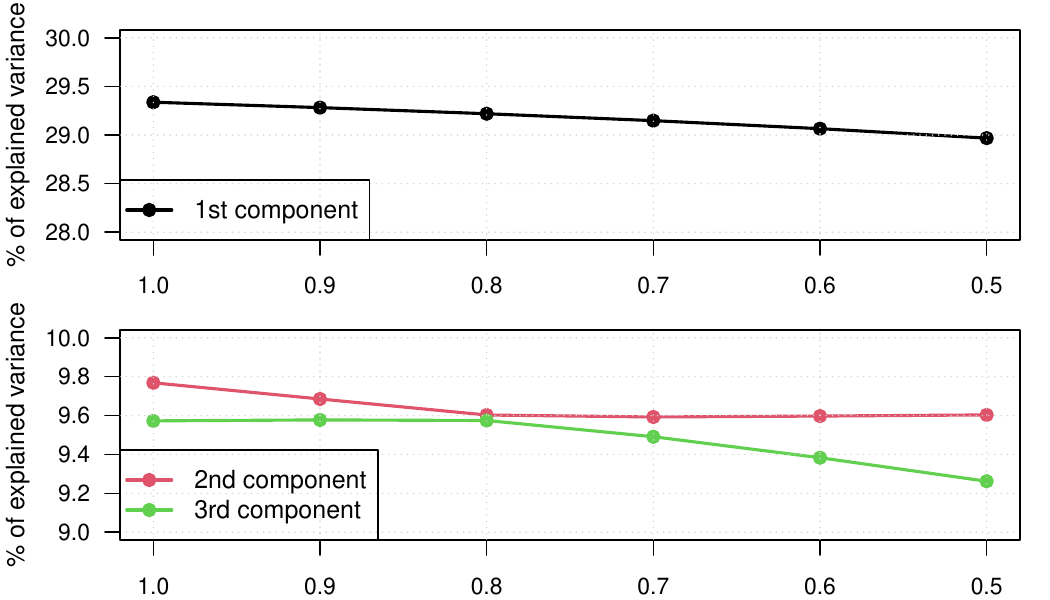}
    \caption{Percentage of explained variance by the first three components depending on the imputation constant (multiple of the default value).}
    \label{fig:sim_perc_var}
\end{figure}

\begin{figure}
    \centering
    \includegraphics[width=0.8\linewidth]{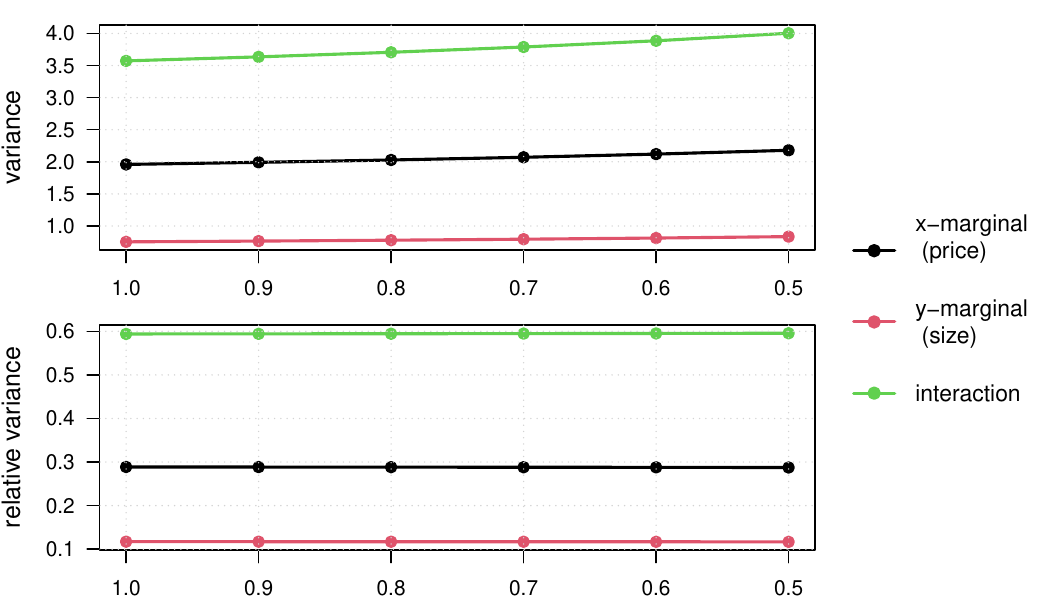}
    \caption{Decomposition of the total variance of the housing density data depending on the imputation constant (multiple of the default value). Top panel: the variance of each part in the orthogonal decomposition; bottom panel: the proportion of variance (relative variance) for each part in the orthogonal decomposition.}
    \label{fig:sim_var_dec}
\end{figure}

Fig.~\ref{fig:sim_scores} shows the FPCA scores and their decomposition for each step of the sensitivity analysis (i.e., multiplication of the mean by a constant in descending order). Similarly, the eigenfunctions can also be explored (see Fig.~\ref{fig:sim_l1} to \ref{fig:sim_l05}). We see that the first principal component and its scores are mostly stable, and capture structural differences in the distributions across US regions. However, the second and third principal component are very close in percent explained variance (9.77~\% and 9.57~\%, cf.\ also the scree plot in Fig.~\ref{fig:h:scree}) and we see that in such  cases, different zero imputation strategies can lead to some differences in results. In particular, some switching seems to take place here between the second and third principal component, starting with the 0.8 multiple. These differences also affect the total scores. Notably, the percentage of variance explained by the first component remains at around 29~\%, and the percentages for the second and third components remain between 9 and 10~\%.

Fig.~\ref{fig:sim_var_dec} displays the decomposition of total variance as the imputation constants decrease. The results are very stable; the proportion of variance for each part in the orthogonal decomposition even does not change at all as the constant in the imputation process is changed.


\subsection{Geological application}
\label{geo}

This application is focused on a Czech geological data set, specifically on concentrations of copper (Cu) and zinc (Zn) in the soil from the RKP data (Register of Contaminated Ares, in Czech Registr Kontaminovan\'ych Ploch), see \cite{Podlesakova1996,Zbiral2004,Polakova2011}.

There are concentrations measured at several locations in 77 Czech districts and we were interested in the bivariate distribution of Cu and Zn concentrations for each of the districts. The original concentrations were measured in $\mathrm{mg} \cdot \mathrm{kg}^{-1}$. Following preprocessing in \cite{skorna24}, the concentrations were log-transformed and the domains were truncated to $[0.588,4.58]$ for Cu and $[1.459,5.663]$ for Zn (similarly as in \cite{MatysGrygar2024}). The final data set contains 49 107 observations and there are from 38 (Brno--město) to 1 570 (Třebíč) observations per district. Due to truncation, 463 observations (0.93~\% of the data) were lost. The bivariate densities of log-concentrations were obtained by a kernel density estimator for each district resulting in discretized densities on a 60$\times$60 grid. This was followed by the imputation of zeros by decreasing values as described in \cite{skorna24}. Once again, all computations were performed in the clr space $\Lz$ after clr transforming all densities.

\begin{figure}
    \centering
    \includegraphics[width=0.8\linewidth]{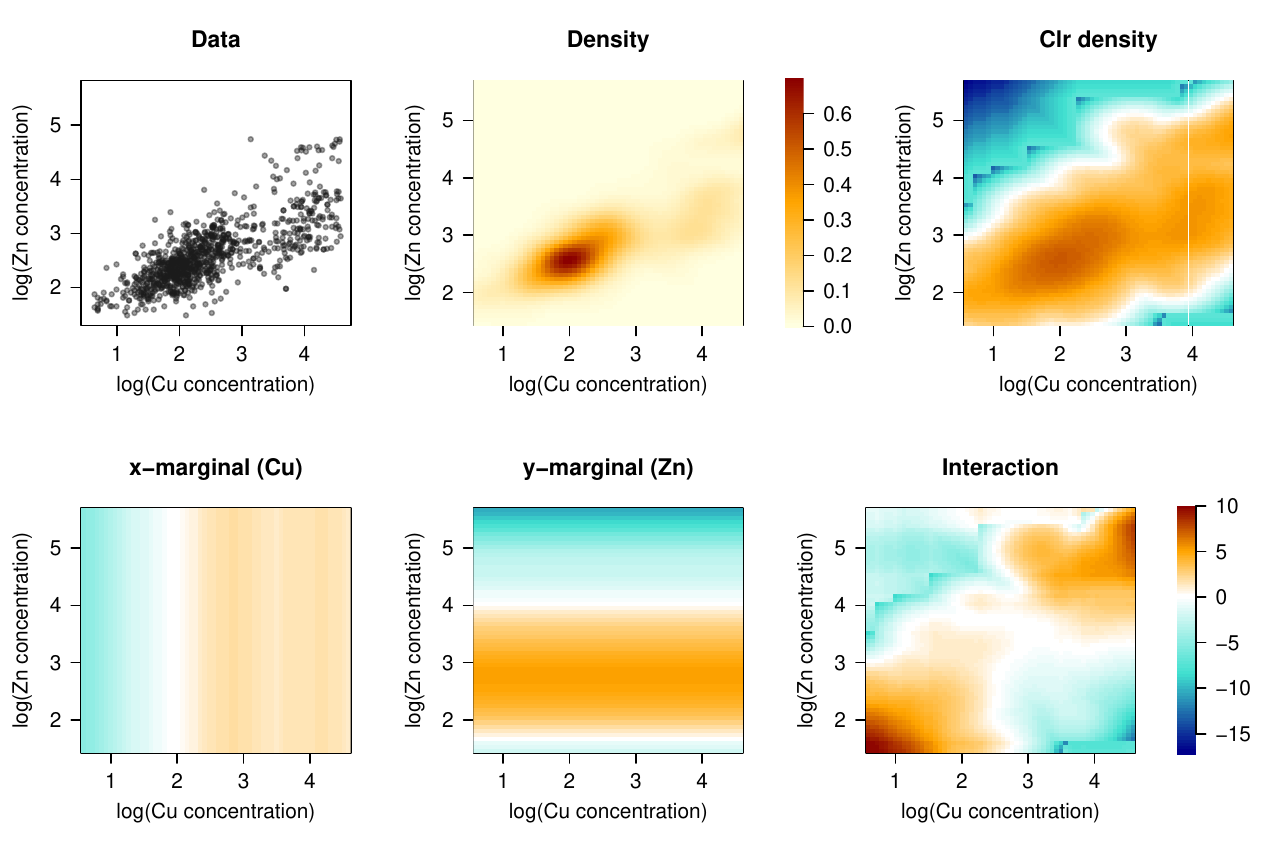}
    \caption{Example data for the Louny district: The first row shows the raw data, the density estimate, and the clr-transformed density. The second row shows the orthogonal decomposition of this density on clr level (the two geometric marginals and the interactive part).}
    \label{fig:g:Louny}
\end{figure}

The results for the Louny district are displayed in Fig.~\ref{fig:g:Louny}. The upper row contains a scatter plot of the raw data, the estimated density and its clr transformation. The bottom row displays the orthogonal parts of the (clr) density, specifically the geometric marginals and the interactive part.

The computational process of FPCA was the same as that for the US housing data (Section~\ref{housing}). The clr densities were first centered and then FPCA for multivariate densities was performed. As a result, the loadings (eigenfunctions) and scores were obtained. The loadings were then decomposed and used to compute decomposed scores.

\begin{figure}[t]
    \centering    
    \includegraphics[width=0.8\linewidth]{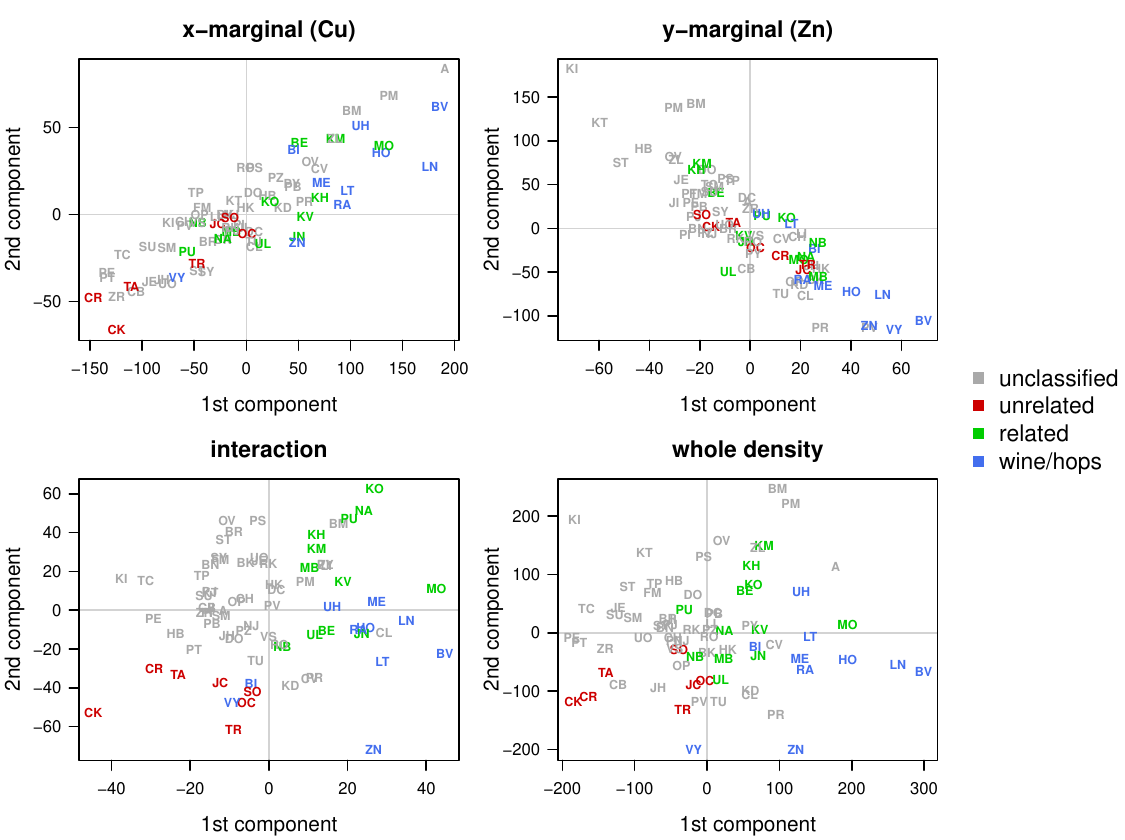}
    \caption{Scores of geological data and their decomposition. The colours corresponds to the selected districts where concentrations of Cu and Zn are unrelated (red), related (green) and where wine or hops are grown (blue). The districts that were not assigned to any of the three groups, are shown in grey.}
    \label{fig:g:scores}
\end{figure}

In Fig.~\ref{fig:g:scores}, the resulting FPCA scores and their decomposition corresponding to the first two principal components are displayed. Unlike the housing data, the colours were assigned only to several chosen districts in this case. The red colour represents the districts where the concentrations of Cu and Zn are not related (these are heterogeneous districts, as follows from the geological background). On the other hand, the districts, where geological relationships between Cu and Zn exist, are green. The blue districts are well known for growing wine or hops; the Cu concentrations are often higher due to the use of Cu pesticides there. There are some structural differences between the three chosen groups of districts, which are mostly visible in the two bottom plots in Fig.~\ref{fig:g:scores} (interaction and total scores), but there are also differences in the marginals.

\begin{figure}[t]
    \centering
    \includegraphics[width=0.85\linewidth]{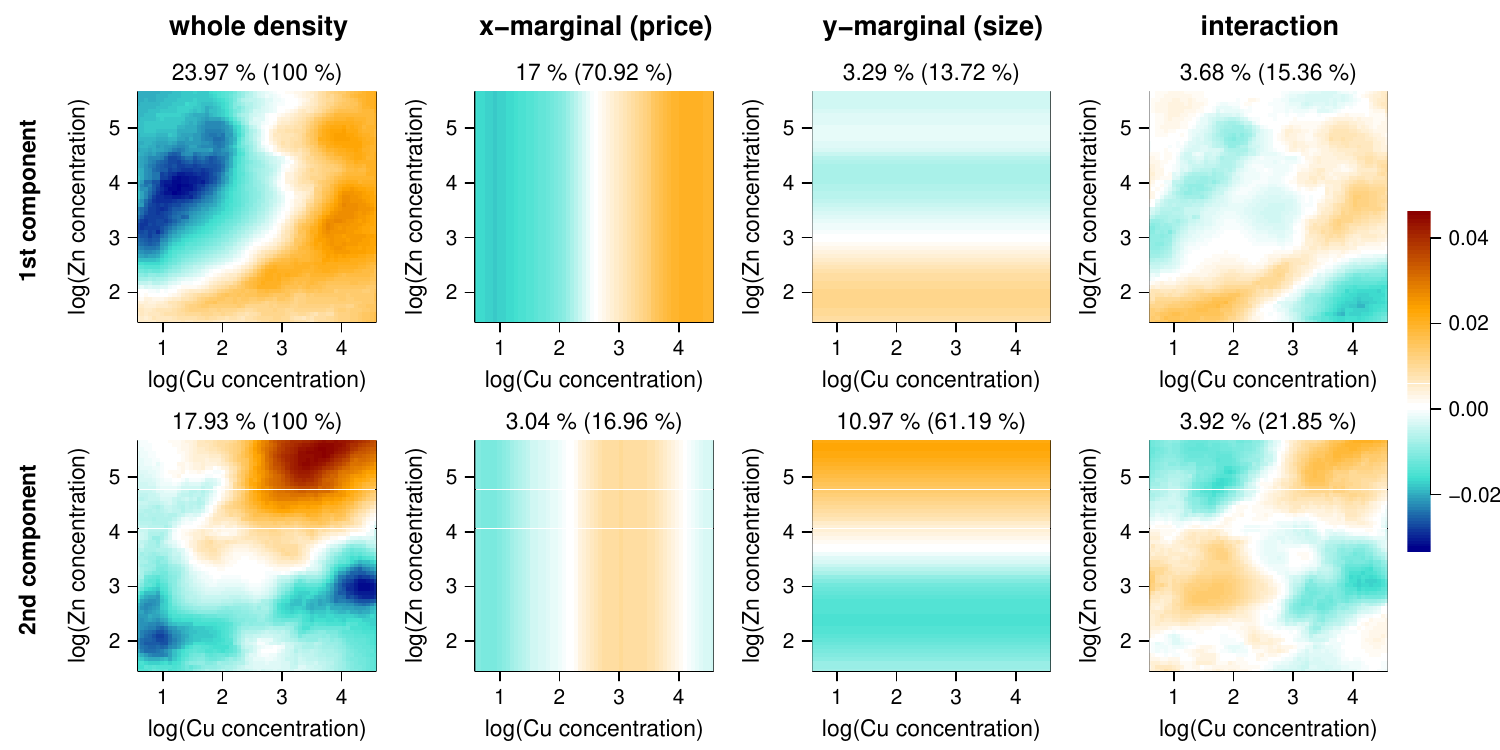}
    \caption{Geological data eigenfunctions and their orthogonal decomposition corresponding to the first two functional principal components on clr level. The percentages show how much variance is explained by each eigenfunction or its part. The percentages in brackets show how much variance of the functional principal component are explained by the particular parts of the eigenfunction.}
    \label{fig:g:loadings}
\end{figure}

Fig.~\ref{fig:g:loadings} shows the loadings and their orthogonal decomposition corresponding to the first two principal components. As in Section~\ref{housing}, these can be used for the interpretation of the scores. As before, marginal loadings are quite easy to interpret because they are connected to high or low concentrations. Firstly, both Cu loadings represent higher Cu concentrations for higher scores, and in connection with the Cu scores, these mostly confirm the fact that there are higher Cu concentrations in wine and hops districts (blue). For instance, Břeclav (BV) is the most typical wine district, while Louny (LN) is the district where the largest quantity of hops is grown. Secondly, the first Zn component is connected to low concentrations, whereas the second one is connected to high concentrations; this means that the Zn scores (top-right panel in Fig.~\ref{fig:g:scores}) show the districts from those with the highest concentrations (top-left corner of the plot) to those with the lowest concentrations (bottom-right corner of the plot). In the wine and hops districts, there are typically lower Zn concentrations. As already mentioned in Section~\ref{housing}, the interaction loadings are more complex, but both indicate positive dependencies between the two elements that are not perfectly linear, so the interaction scores capture the differences between districts with unrelated (red) and related (green) concentrations of Cu and Zn. All of the above results are aggregated into the total scores and loadings, which quite well represent the structural differences between the three selected groups of districts (red, green, blue; grey districts are not assigned to any group).

\begin{figure}
    \centering
    \includegraphics[width=0.6\linewidth]{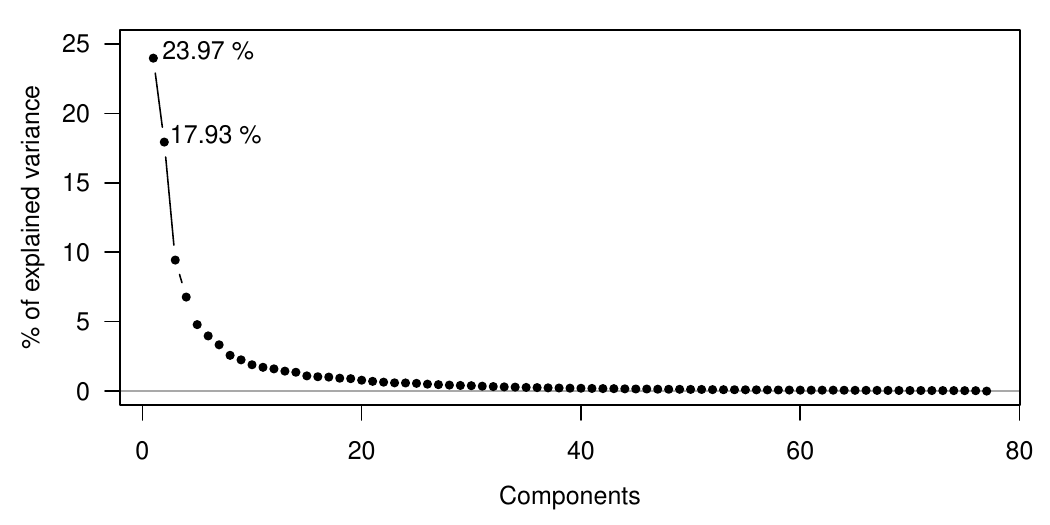}
    \caption{Scree plot for the geological data. The first two principal components together explain 41.9~\% of variance in the data.}
    \label{fig:g:scree}
\end{figure}

Now, if we focus on the explained variability and look at the scree plot (Fig.~\ref{fig:g:scree}), we see that the first principal component explains 23.97~\% of data variance and the second one 17.93~\%. We can also notice that the explained variability drops more slowly than it did for the US housing data (Section~\ref{housing}). Together, the first two components explain 41.9~\% of the variance in the RKP data. More details on the percentages of explained variance are given in Fig.~\ref{fig:g:loadings}.

\begin{figure}[htb]
    \centering
    \includegraphics[width=0.99\linewidth]{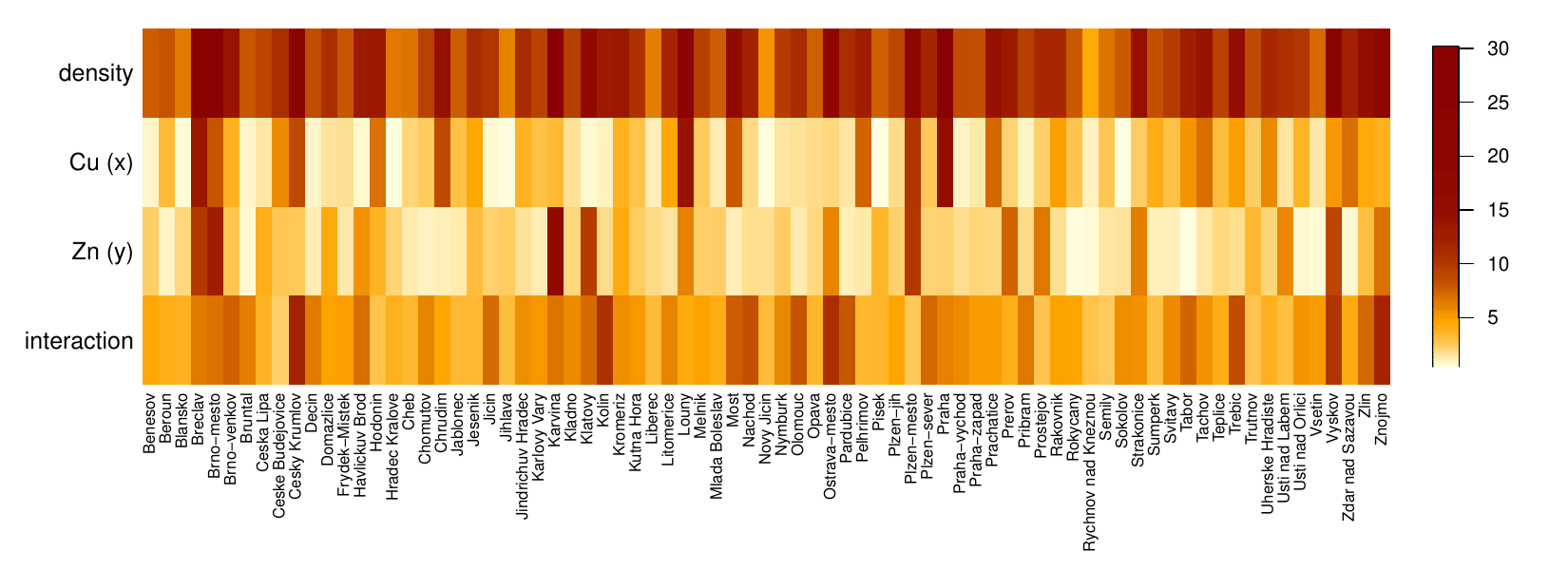}
    \caption{Squared norms of the centered bivariate densities and their decomposition for all the Czech districts.}
    \label{fig:g.norms}
\end{figure}

\begin{figure}[htb]
    \centering
    \includegraphics[width=0.99\linewidth]{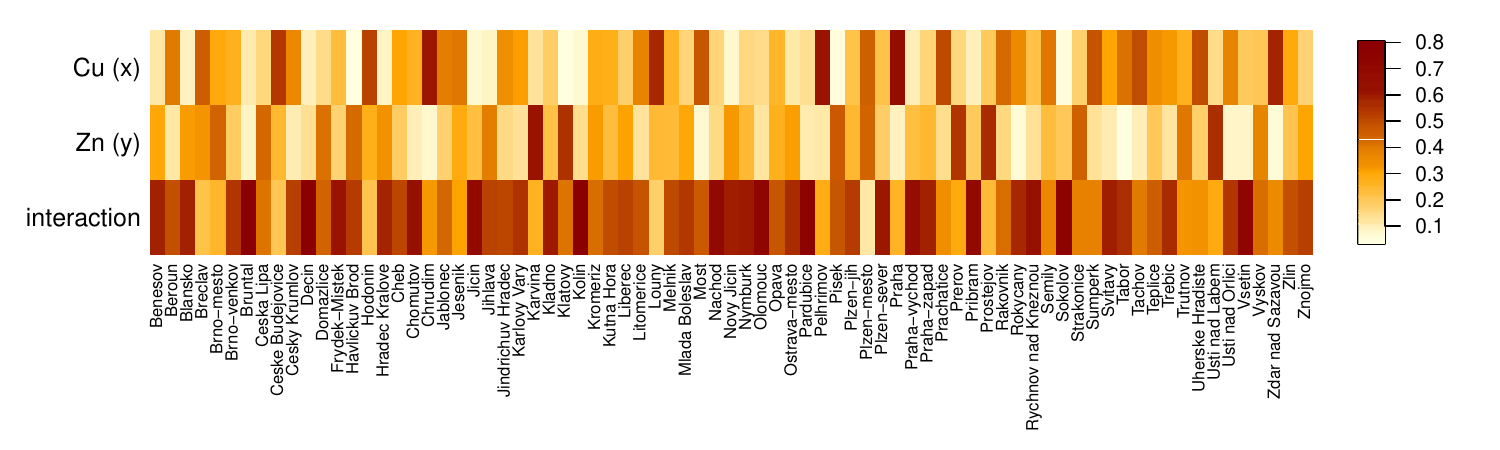}
    \caption{Relative norms of the orthogonal density parts of the geological data.}
    \label{fig:g:rel.norms}
\end{figure}

Similarly as in the previous analysis, we computed squared norms of the centered densities to inspect the decomposition of the variability. In this case, it is not possible to clearly state which part of the decomposition is the most significant due to the differences observed for individual districts. In Fig.~\ref{fig:g.norms}, we can see that while the interactive part seems to be the most relevant for most districts, there are clearly also exceptions where one of the (geometric) marginals is the most important. Overall, the total squared norm appears to be quite large for all districts. For a better comparison of the squared norms of the parts from the decomposition, the relative norms are shown in Fig.~\ref{fig:g:rel.norms}, which confirm the overall strongest importance of the interactive part, but with exceptions.

The variances of the (geometric) marginals are quite similar, the Cu marginals ($x$) contain 29.42~\% of the sample total variance, and the Zn marginals ($y$) contain 25.53~\%. The largest variance is that of the interactive parts, indicating that these parts are the most informative in the density data, as they capture 45.05~\% of the sample total variance.

\begin{figure}
    \centering
    \includegraphics[width=0.9\linewidth]{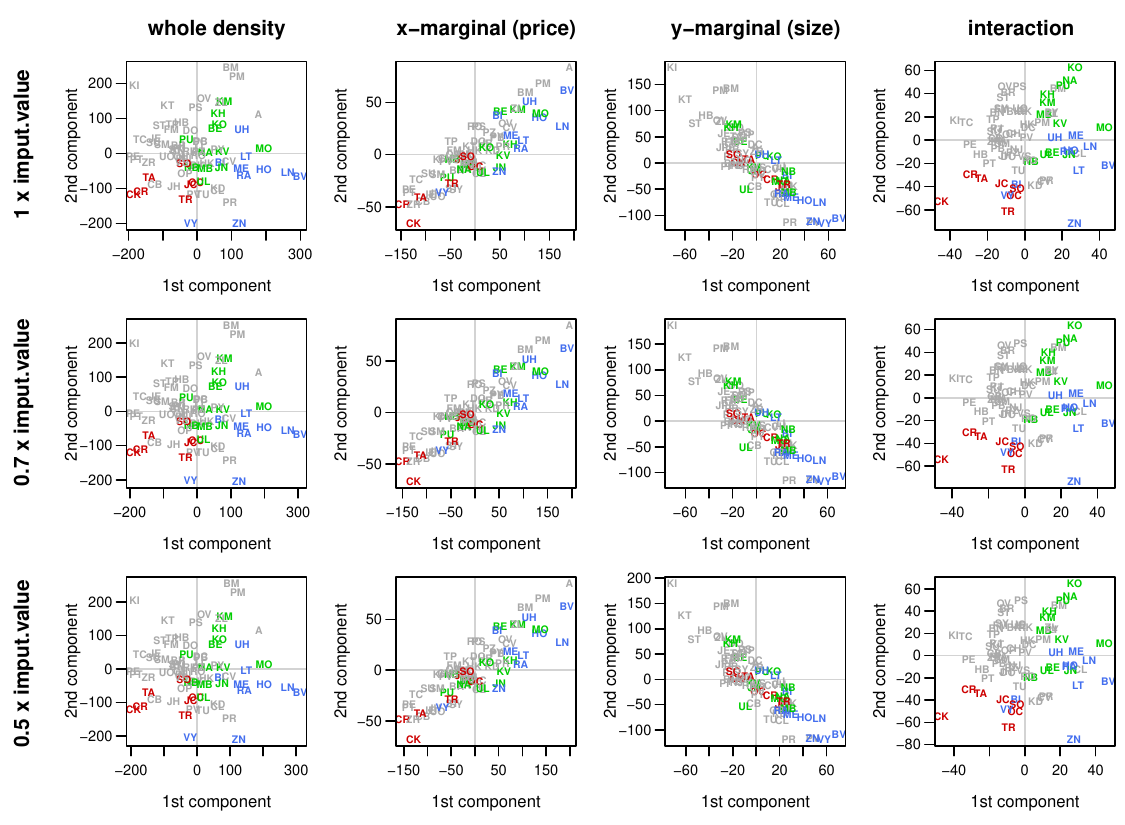}
    \caption{Decomposition of FPCA scores depending on the imputation strategy.}
    \label{fig:sim_g_scores}
\end{figure}

\begin{figure}
    \centering
    \includegraphics[width=0.8\linewidth]{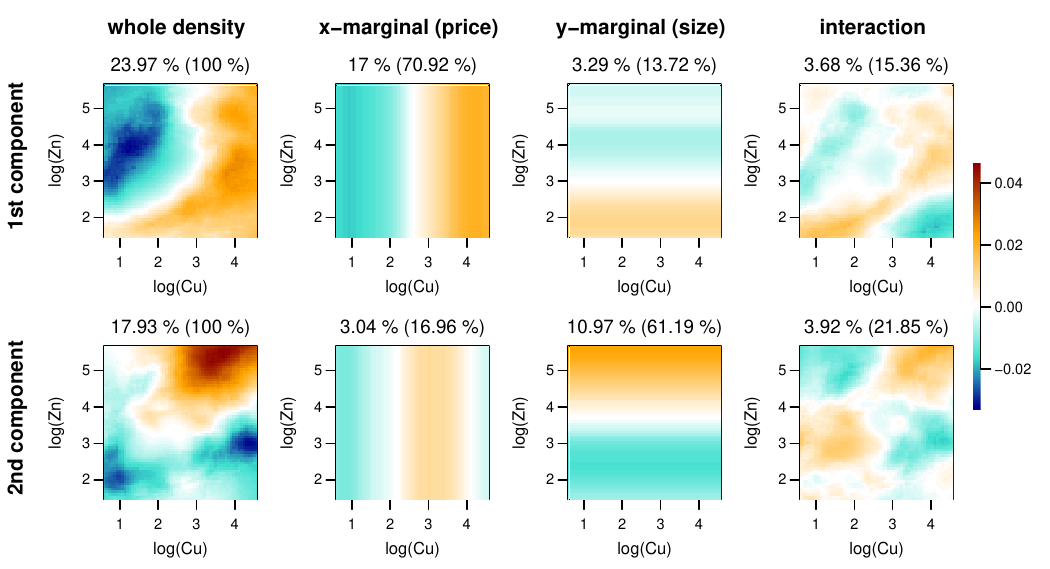}
    \caption{Loadings: $1\times\text{(imputation value)}$}
    \label{fig:sim_g_l1}
\end{figure}

\begin{figure}
    \centering
    \includegraphics[width=0.8\linewidth]{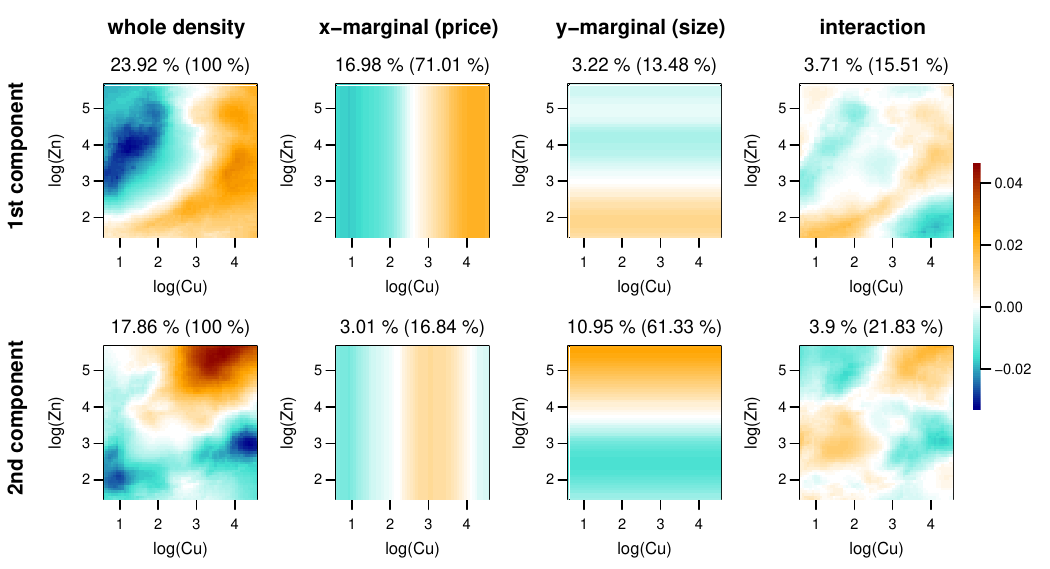}
    \caption{Loadings: $0.7\times\text{(imputation value)}$}
    \label{fig:sim_g_l07}
\end{figure}

\begin{figure}
    \centering
    \includegraphics[width=0.8\linewidth]{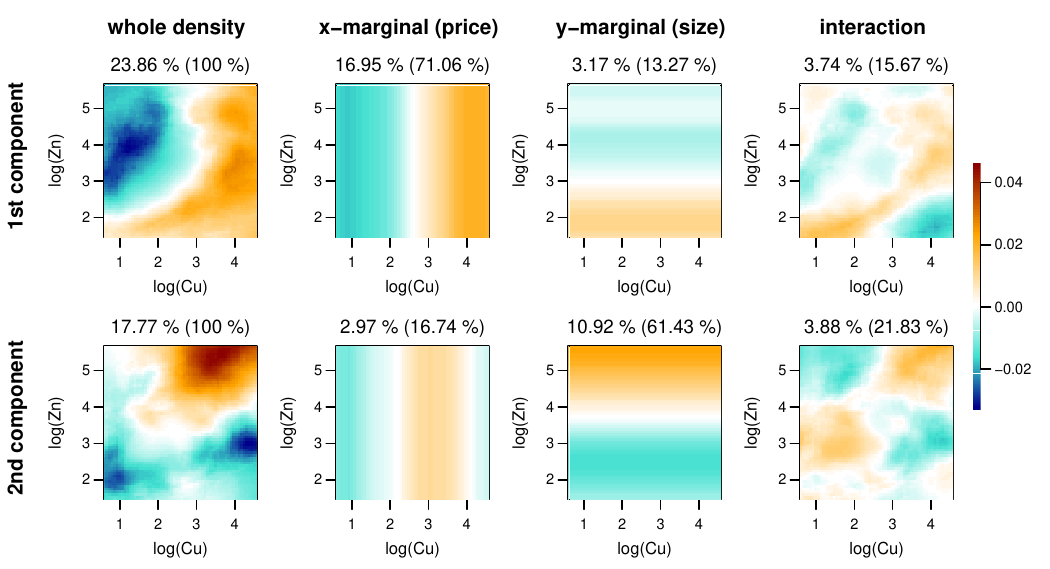}
    \caption{Loadings: $0.5\times\text{(imputation value)}$}
    \label{fig:sim_g_l05}
\end{figure}

\begin{figure}
    \centering
    \includegraphics[width=0.7\linewidth]{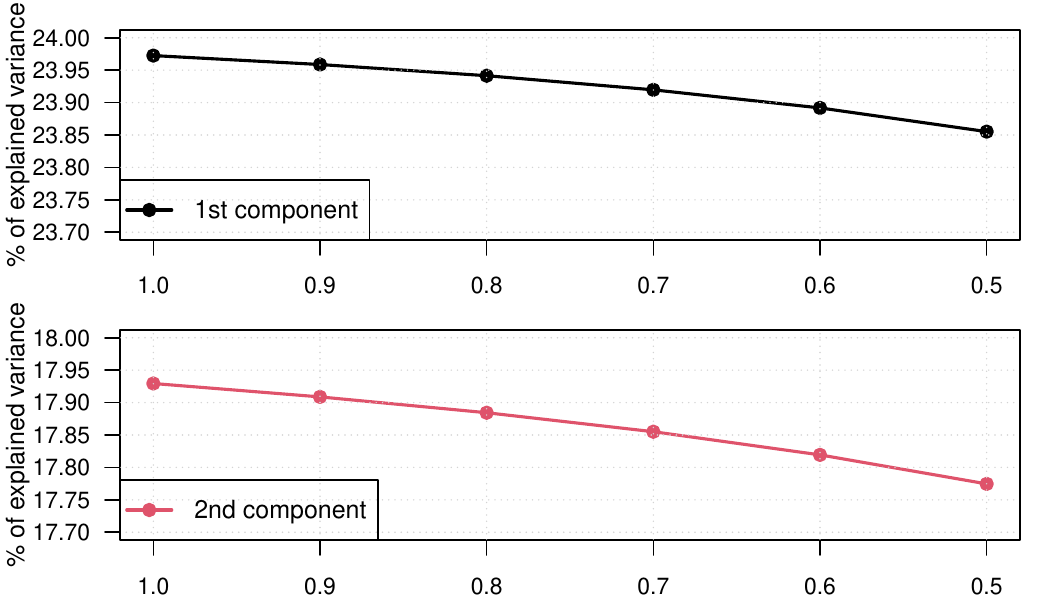}
    \caption{Percentage of explained variance by the first two components depending on the imputation constant (multiple of the default value).}
    \label{fig:sim_g_perc_var}
\end{figure}

\begin{figure}
    \centering
    \includegraphics[width=0.\linewidth]{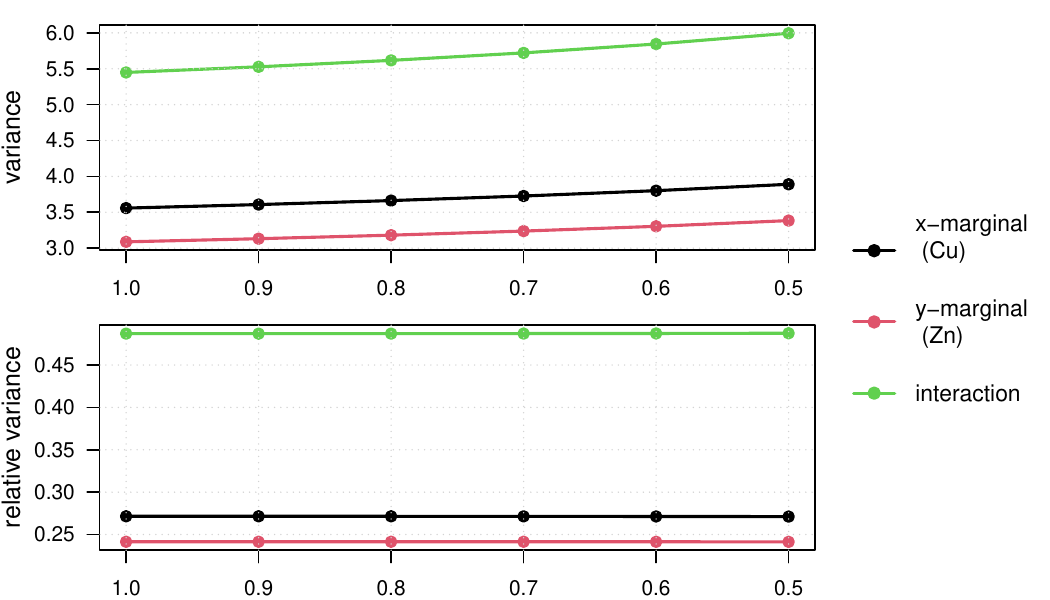}
    \caption{Decomposition of the total variance of the geological density data depending on the imputation constant (multiple of the default value). Top panel: the variance of each part in the orthogonal decomposition; bottom panel: the proportion of variance (relative variance) for each part in the orthogonal decomposition.}
    \label{fig:sim_g_var_dec}
\end{figure}

Also for this geological application, the same sensitivity analysis was performed to demonstrate the robustness of the method against different choices for the zero imputation. Similarly as for the housing data (Section~\ref{housing}), we used multiples of the default imputation values. In this case, the FPCA results are not significantly affected by imputation; scores, eigenfunctions and percentage of explained variance almost do not change (see Figures \ref{fig:sim_g_scores} to \ref{fig:sim_g_perc_var}, for scores and eigenfunctions only the original setting and its 0.7 and 0.5 multiples are shown). An obvious reason is that there are no principal components with close percentages of explained variance  like in the case of the housing data, such that small changes in the data cannot lead to any changes in the order of components. Fig.~\ref{fig:sim_g_var_dec} shows that the decomposition of the total variance is very stable and the proportion of variance for each part in the orthogonal decomposition remains the same across all scenarios.


\subsection{Computational details}
\label{details}

The computational process was basically the same for both examples, in Sections~\ref{housing} and~\ref{geo}. The domains were truncated and variables were log-transformed, then a kernel density estimation resulted in discretized density on a selected grid, i.e., each density was represented by a matrix of density function values. Zero values were imputed, as described in Sections~\ref{housing} and~\ref{geo}, followed by clr transformation, centering, and orthogonal decomposition in the clr space, denoted $\mathbf{F}_n = \mathbf{F}_n^\X + \mathbf{F}_n^\Y + \mathbf{F}_n^\intr$. The resulting matrices were vectorized. The data matrix $\mathbf{F}$ with rows $\mathrm{vec}(\mathbf{F}_n)^\top$ was then constructed with the vectorized (non-decomposed) densities as its rows, so each row corresponds to one of the states or districts. 

The singular value decomposition (SVD) of the matrix $\mathbf{F}$ was performed, i.e., $\mathbf{F} = \mathbf{UDV}^\top$, resulting in the matrix of total scores $\mathbf{S} = \mathbf{UD}$, square roots of variances (diagonal elements of $\mathbf{D}$), and loadings (discretized eigenfunctions) vectorized in the columns of the matrix $\mathbf{V}$. The loadings were transformed into matrices $\mathbf{\Psi}_j$ and orthogonally decomposed, resulting in matrices $\mathbf{\Psi}_j^\X, \mathbf{\Psi}_j^\Y, \mathbf{\Psi}_j^\intr$, then the decomposed scores were computed as inner products of the original centered density parts and the orthogonal parts of the loadings. Practical computations can be simplified when both density parts and loading parts are vectorized, denoted $\mathrm{vec}(\mathbf{F}_n^\X)^\top, \mathrm{vec}(\mathbf{F}_n^\Y)^\top, \mathrm{vec}(\mathbf{F}_n^\intr)^\top, \mathrm{vec}(\mathbf{\Psi}_j^\X)^\top, \\ \mathrm{vec}(\mathbf{\Psi}_j^\Y)^\top, \mathrm{vec}(\mathbf{\Psi}_j^\intr)^\top$, and are the rows of matrices $\mathbf{F}^\X, \mathbf{F}^\Y, \mathbf{F}^\intr, \mathbf{\Psi}^\X, \mathbf{\Psi}^\Y, \mathbf{\Psi}^\intr$. Then each score matrix is obtained as a product of matrix of vectorized original density parts and matrix of vectorized loading parts, i.e., $\mathbf{S}^\X = \mathbf{F}^\X(\mathbf{\Psi}^\X)^\top$, $\mathbf{S}^\Y = \mathbf{F}^\Y(\mathbf{\Psi}^\Y)^\top$, $\mathbf{S}^\intr = \mathbf{F}^\intr(\mathbf{\Psi}^\intr)^\top$.

The percentages of explained variance are then computed using squares of diagonal elements of the matrix $\mathbf{D}$ from SVD. 

The squared norms are the sums of the squared elements of the corresponding matrix of function values, divided by the number of grid points $G$, i.e., $\|\mathbf{F}_n\|^2 = \sum_{i,j} (\mathbf{F}_n)_{ij}^2/G$, and accordingly for $\mathbf{F}_n^\X, \mathbf{F}_n^\Y, \mathbf{F}_n^\intr, \mathbf{\Psi}_j, \mathbf{\Psi}_j^\X, \mathbf{\Psi}_j^\Y$, $\mathbf{\Psi}_j^\intr$.

The variances (the total variance and its decomposition of the centered densities and the loadings) are equal to the sample means of the corresponding squared norms, e.g., $\mathrm{var}(\mathbf{F}) = (1/N) \cdot \sum_n \|\mathbf{F}_n\|^2$, and accordingly for the density parts and the loadings. These are then used to compute the percentages of the total variance.


\section{Conclusions} 
\label{sec:conc}

The introduction of the orthogonal decomposition of multivariate PDFs in \cite{genest23} demonstrated the potential of the Bayes space methodology, as predicted in \cite{petersen22}, and paved the way for further related developments. This paper presents two such developments related to dimension reduction using principal component analysis.

Firstly, it was shown that the variance decomposition of PDFs using their orthogonal decomposition is optimal in the sense of PCA. In other words, the first few densities from the decomposition with the largest variance explain most of the total variance contained in the PDF sample. Exploratory tools such as the scree plot, which is familiar to those who have worked with principal component analysis (see, for example, \cite{johnson07}), can then be used to determine how many elements of the decomposition to take for further processing. Subsequently, the multivariate structure of PDFs can be substantially simplified without significant loss of information.

Secondly, eigenfunctions and scores from FPCA for multivariate densities can be decomposed according to the orthogonal decomposition of PDFs. Moreover, this decomposition of eigenfunctions and scores is unique, as demonstrated by showing that the orthogonal decomposition of FPCA eigenfunctions and scores for the multivariate densities is equivalent to a multivariate FPCA for the vectors of decomposed density components. This enables a structural analysis of eigenfunctions and scores, specifically assessing the role of geometric marginals and various interaction patterns in dimension reduction. For example, it is possible to determine which element(s) of the decomposition are responsible for the outlyingness of a specific observation in the score plot. It can also reveal the source of groupings in a sample of PDFs. However, given the general popularity of (functional) PCA for dimension reduction, the potential for applications is expected to be much broader.

Of course, the Bayes space methodology also has its limitations. The most significant of these is the assumption of square-log integrability of densities and the related amplification of PDF tails. While this follows naturally from the relative scale of densities, it can present challenges in practical implementation, as the estimated densities from raw sampled data are usually affected by the highest approximation error in the tails. The only currently possible solution is to downweight the tails by selecting an appropriate reference measure $\lambda$. Future work could investigate working directly with the raw observations instead of first estimating densities, as done for univariate densities in \cite{steyer23} and \cite{maier2025}. Preprocessing issues must be solved to make the Bayes space approach more accessible to practitioners. On the other hand, the practical inconveniences are amply offset by the deep insight into the density data structure provided by the Bayes space methodology. 


\section*{Acknowledgements}

\noindent AC and KH gratefully acknowledge the support of the Czech Science Foundation, project 25-15447S, and projects IGA\_PrF\_2025\_015, IGA\_PrF\_2026\_018 Mathematical models from the Internal Grant Agency of the Palacký University Olomouc. Funded by the Deutsche Forschungsgemeinschaft (DFG, German Research Foundation) - project number 513634041. We would also like to thank Dr. Tomáš Matys Grygar of the Czech Academy of Sciences for his help in interpreting the geological application.


\bibliographystyle{mybib} 
\bibliography{literature}

\end{document}